\pgfplotsset{width=7cm,compat=1.3}
\newcommand{\nop}[1]{} 
\newcommand{\shorten}[1]{}
\newtheorem{proposition}{Proposition}
\newtheorem{theorem}{Theorem}
\newtheorem{definition}{Definition}
\newtheorem{lemma}{Lemma}
\newtheorem{corollary}{Corollary}
\newtheorem{example}{Example}
\newcommand{\signed}%
    {{\unskip\nobreak\hfill\penalty50
      \hskip2em\hbox{}\nobreak\hfil $\blacksquare$
      \parfillskip=0pt \finalhyphendemerits=0 \par}}
\newenvironment{proof}[1]
    {
    \bf{Proof:}\rm{\noindent{#1 }}\ignorespaces
    }
    {\signed\addvspace\medskipamount}
\begin{document}

\title{Network Traffic Driven Storage Repair}
%
\author{
	\IEEEauthorblockN{Danilo Gligoroski\IEEEauthorrefmark{1}, Katina Kralevska\IEEEauthorrefmark{1}, Rune E.~Jensen\IEEEauthorrefmark{2}, and Per Simonsen\IEEEauthorrefmark{3}\\ Email: \{danilog,katinak\}@ntnu.no, runeerle@idi.ntnu.no, per.simonsen@memoscale.com}
	\IEEEauthorblockA{\IEEEauthorrefmark{1}Dep. of Information Security and Communication Technology, NTNU, Norwegian University of Science and Technology}
	\IEEEauthorblockA{\IEEEauthorrefmark{2}Department of Computer Science, NTNU, Norwegian University of Science and Technology}
	\IEEEauthorblockA{\IEEEauthorrefmark{3}MemoScale AS, Norway}
}
\maketitle

\begin{abstract}	
Recently we constructed an explicit family of locally repairable and locally regenerating codes. Their existence was proven by Kamath et al. but no explicit construction was given. Our design is based on HashTag codes that can have different sub-packetization levels. In this work we emphasize the importance of having two ways to repair a node: repair only with local parity nodes or repair with both local and global parity nodes. We say that the repair strategy is network traffic driven since it is in connection with the concrete system and code parameters: the repair bandwidth of the code, the number of I/O operations, the access time for the contacted parts and the size of the stored file. We show the benefits of having repair duality in one practical example implemented in Hadoop. We also give algorithms for efficient repair of the global parity nodes.

\shorten{Recently we constructed an explicit family of locally repairable and locally regenerating codes. Their existence was proven in the work by Kamath et al. about codes with local regeneration but no explicit construction was given. This is an extension of our initial work published at DataCom 2017. Our design is based on HashTag codes. HashTag codes are vector codes with different vector length $\alpha$ (also called a sub-packetization level) that achieve the optimal repair bandwidth of MSR codes or near-optimal repair bandwidth depending on the sub-packetization level. We applied the technique of parity-splitting code construction. 
One of the important contributions in our work is that we emphasize the importance of having two ways to repair a node: repair only with local parity nodes or repair with both local and global parity nodes. To the best of the authors' knowledge, this is the first work where this duality in repair process is discussed. Which repair strategy is better depends on how the repair process will affect the overall network traffic and is in connection with the concrete system and code parameters: the repair bandwidth of the code, the number of I/O operations, the access time for the contacted parts and the size of the stored file. We give a practical example and experimental results in Hadoop where we show the benefits of having this repair duality. In this extended work we also give algorithms for efficient repair of the global parity nodes.}
\end{abstract}

{\bfseries {Keywords}}: Vector codes, Repair bandwidth, Repair locality, Exact repair, Parity-splitting, Global parities, Hadoop.

%
\IEEEpeerreviewmaketitle

\section{Introduction}


The omnipresence of digitalization in the modern human civilization resulted in exponential growth of the digital universe. According to Cisco Global Cloud Index (CGI) \cite{networking2016cisco} the amount of all data stored by 2021 will be 7.2 ZB, with a proportion: 1.3 ZB stored in data centers, and 5.9 ZB stored in local devices. As a result of this, the importance of distributed storage systems rose to the level of being a backbone and a critical component for all existing infrastructures. Distributed storage systems became the crucial component for delivering IT services, providing storage services, enabling communications and networking to the users, devices and business processes \cite{networking2016cisco}. 

In order to provide a reliable service, distributed storage systems use a simple data replication (usually triplication). Replication becomes very expensive in terms of storage overhead due to the enormous amount of data stored in these data centers (measured with hundreds of petabytes). One alternative for providing reliability in data storage systems with significantly less overhead is to use erasure codes. Several big providers of distributed storage such as Windows Azure \cite{conf/usenix/HuangSXOCG0Y12} and Facebook Analytics Hadoop cluster \cite{journals/pvldb/SathiamoorthyAPDVCB13} have implemented different erasure codes. Recently, the official Apache distribution of Hadoop 3.0.0 \cite{ApacheHadoop} has started to give an option to use several classical Reed-Solomon erasure codes such as $(5,3)$, $(9, 6)$ and $(14, 10)$ codes in its file system HDFS. Reed-Solomon codes \cite{Reed:1960:PCC} are Maximum Distance Separable (MDS) codes, and thus they are optimal from the storage overhead point of view, but in practice they are expensive in the number of computations and in the amount of network traffic used in the recovery process. In 2010, Dimakis et al. \cite{5550492} showed the existence of another family of MDS codes - Minimum Storage Regenerating codes that minimize the amount of data transmitted over the network for repairing one failed node with MDS codes.

While the benefits of using erasure codes instead of simple replication are obvious in terms of the storage overhead, there are other aspects that are not that favorable for erasure codes. One of them is the so called \emph{repair efficiency}. In the case of a replication, the repair process is just a simple read (or copy) from the redundant data. On the other hand, the repair process with erasure coding involves data access from the non-failed nodes, transfer of the accessed data and decode computations at the node being repaired. It is essential to consider the concrete system and code parameters such as the repair bandwidth of the code, the number of I/O operations, the access time for the contacted parts and the size of the stored file when choosing the repair strategy with erasure codes. Arguably, we say the repair strategy is network traffic driven. In particular, there are two main metrics of the repair efficiency with erasure codes in distributed storage systems: the amount of transferred data during a repair process (\emph{repair bandwidth}) and the number of accessed nodes in a repair process (\emph{repair locality}). \emph{Regenerating Codes} (RCs) \cite{5550492} and \emph{Locally Repairable Codes} (LRCs) \cite{journals/tit/GopalanHSY12},\cite{conf/infocom/OggierD11}\cite{6195703} are optimized erasure codes for each of these two metrics, respectively.

In our recent work \cite{8328506}, we combine the benefits of RCs and LRCs together in one code construction. We construct an explicit family of locally repairable and locally regenerating codes whose existence was proven in a recent work by Kamath et al. \cite{6846301} about codes with local regeneration. In that work, an existential proof was given, but no explicit construction was given. Our explicit family of codes is based on HashTag codes \cite{7463553,8025778}.
HashTag codes are MDS vector codes with different vector length $\alpha$ (also called a sub-packetization level) that achieve the optimal repair bandwidth of MSR codes or near-optimal repair bandwidth depending on the sub-packetization level.
We apply the technique of parity-splitting of HashTag codes in order to construct codes with locality in which the local codes are regenerating codes and which hence, enjoy both advantages of locally repairable codes as well as regenerating codes. It is observed in \cite{Rashmi:2014:HGF:2619239.2626325} that 98.08\% of the failures in Facebook's data-warehouse cluster that consists of thousands of nodes are single failures. Thus, we optimize the repair for single failures although HashTag codes provide repair bandwidth savings for multiple failures as it is reported in \cite{8025778}.

We also show (although just with a concrete example) that the bound on the size of the finite field where these codes are constructed, given in the work by Kamath et al. \cite{6846301} can be lower. The presented explicit code construction has a practical significance in distributed storage systems as it provides system designers with greater flexibility in terms of selecting various system and code parameters due to the flexibility of HashTag code constructions.

We discuss the repair duality and its importance. Repair duality is a situation of having two ways to repair a node: to repair it only with local parity nodes or repair it with both local and global parity nodes.   To the best of the authors' knowledge, this is the first work that discusses the repair duality and how it can be applied based on concrete system and code parameters. Results from a Hadoop implementation illustrate the benefits of repair duality.

We further optimize the repair efficiency by giving an algorithm for shuffling the data in the parity nodes in order to reduce the repair bandwidth for the global parities.

The paper is organized as follows. Section \ref{RW} presents related work, and Section \ref{mathPrel} presents mathematical preliminaries. In Section \ref{paritySplitting}, we describe a framework for explicit constructions of locally repairable and locally regenerating codes. The repair process is analyzed in Section \ref{repair} where we explain the repair duality. In Section \ref{global}, we give an algorithm for efficient repair of the global parity nodes. Experimental results of measurements in Hadoop are given in Section \ref{Hadoop}. Conclusions are summarized in Section \ref{summary}.

\section{Related Work}\label{RW}

Minimum Storage Regenerating (MSR) codes are an important class of RCs that minimize the amount of data stored per node, due to the MDS property, and the repair bandwidth. The reduction in the repair bandwidth is achieved on the cost of contacting $n-1$ nodes, i.e., increasing the locality. There have been several research directions for MDS storage codes. 
The tradeoff between the sub-packetization level and the repair bandwidth of MDS codes have been investigated in \cite{7463553}, \cite{8025778}, \cite{2017arXiv170908216S}, \cite{8262840},\cite{2018arXiv180603103K}.
In \cite{7902203}, Goparaju et al. presented a construction of MSR codes for optimal repair of the systematic nodes for any number of accessed nodes $d\in\{k+1,\ldots,n-1\}$. The reduced locality comes at the cost of increased sub-packetization level. Another approach for constructing MSR codes with low sub-packetization level for different number of helper nodes is presented in \cite{8267088}.
Explicit constructions of MDS codes, including the MSR point, for optimal repair of the systematic nodes can be found in \cite{7463553,8025778}. Itani et al. used fractional repetition codes to minimize the total system recovery cost for single and multiple failures under various dynamic scenarios \cite{ITANI201784,ITANI20181}. 

On the other hand, LRCs relax the MDS requirement in order to minimize the number of nodes accessed during a repair. Studies on implementation and performance evaluation of LRCs can be found in \cite{Huang:2013:PCF:2435204.2435207,conf/usenix/HuangSXOCG0Y12,journals/pvldb/SathiamoorthyAPDVCB13,7593121}. Locally repairable codes with multiple repair alternatives have been proposed in \cite{6620355,7593121}, while codes with unequal locality have been presented in \cite{7541336}.
Zhang et al. presented local erasure recovery scheme by parity splitting of Vandermonde matrices \cite{7556268} where the regular entries of a Vandermonde matrix enable low-complexity software implementations.
Another way of constructing LRCs for distributed storage is based on low-density parity-check (LDPC) codes \cite{8094003}. LDPC codes have an inherent local repair property as LRCs but their reliability increases with the code length that on the other hand has a negative impact on the computation and the buffer requirements.

Combining the benefits of RCs and LRCs in one storage system can bring huge savings in practical implementations. 
For instance, repair bandwidth savings by RCs are important when repairing huge amounts of data, while a fast recovery and an access to small number of nodes enabled by LRCs are desirable for repair of frequently accessed data. 
Several works present code constructions that combine the benefits of RCs and LRCs \cite{6655894,6846301,7282575}. 
Rawat et al. in \cite{6655894} and Kamath et al. in \cite{6846301} have independently investigated codes with locality in the context of vector codes, and they call them locally repairable codes with local minimum storage regeneration (MSR-LRCs) and codes with local regeneration, respectively. 
Rawat et al. \cite{6655894} provided an explicit construction, based on Gabidulin maximum rank-distance codes, of vector linear codes with all-symbol locality for the case when the local codes are MSR codes. However, the complexity of these codes increases exponentially with the number of nodes due to the two-stage encoding. 
In \cite{6846301}, Kamath et al. gave an existential proof without presenting an explicit construction. 
Another direction of combining RCs and LRCs is to use repair locality for selecting the accessed nodes in a RC \cite{7282575}, while an interpretation of LRCs as exact RCs was presented in \cite{7541379}.
Two different erasure codes, product and LRC codes, are used to dynamically adapt to the workload changes in Hadoop Adaptively-Coded Distributed File System (HACFS) \cite{188446}.
Though there are several proposals for combining two different types of storage codes, they are lacking some of the desired attributes (see Table \ref{compare}).
Readers interested in an extended overview of erasure codes for distributed storage are referred to \cite{2018arXiv180604437B}.

\renewcommand{\arraystretch}{1.5}
\begin{table*}
	\caption{Comparison of locally repairable and locally regenerating codes presented in this paper with other erasure codes for storage.
	\vspace{-0.3cm}
	} \label{compare}
	\begin{center}
		\begin{tabular}{|l|l|l|l|l|l|}
			\hline
			Code & Systematic & Construction type & Order of GF & Any sub-packetization & Optimal repair of global parities\\
			\hline
			MSR-LRCs \cite{6655894} & Yes & Explicit & High & No (only MSR point) & No\\
			\hline
			Codes with local regeneration \cite{6846301} & Yes & Existence & High & No (only MSR point) & No\\
			\hline
			HACFS \cite{188446} & Yes & Explicit & Low & No & Product codes: Yes, LRC: No\\
			\hline
			This paper & Yes & Explicit & Low & Yes & Yes \\
			\hline
		\end{tabular}
	\end{center}
\end{table*}


\section{Mathematical Preliminaries}\label{mathPrel}
Inspired by the work of Gopalan et al. about locally repairable codes \cite{journals/tit/GopalanHSY12}, Kamath et al. extended and generalized the concept of locality  in \cite{6846301}. In this paper, we use notation that is mostly influenced (and adapted) from those two papers.
\textbf{\textit{Notations}}. For two integers $0<i<j$, we denote the set $\{i,i+1,\ldots, j \}$ by $[i:j]$, while the set $\{1,2,\ldots, j \}$ is denoted by $[j]$. 
Vectors and matrices are denoted with a bold font.

\begin{definition}[\cite{6846301}]
	A $\mathbb{F}_q$-linear vector code of block length $n$ is a code $\mathcal{C} \in \big(\mathbb{F}_q^\alpha \big)^n$ having a symbol alphabet $\mathbb{F}_q^\alpha$ for some $\alpha \geq  1$, i.e., 
	$$\mathcal{C} = \{ \mathbf{c} = (\mathbf{c}_1, \mathbf{c}_2, \ldots, \mathbf{c}_n), \mathbf{c}_i \in \mathbb{F}_q^\alpha \text{ for all } i \in [n] \}, $$ 
	and satisfies the additional property that for given $\mathbf{c},\mathbf{c}' \in \mathcal{C}$ and $a,b \in \mathbb{F}_q$, 
	$$a \mathbf{c} + b\mathbf{c}' = (a \mathbf{c}_1 + b\mathbf{c}_1', a \mathbf{c}_2 + b\mathbf{c}_2', \ldots, a \mathbf{c}_n + b\mathbf{c}_n' )$$ also belongs to $\mathcal{C}$ where $a\mathbf{c}_i$ is a scalar multiplication of the vector $\mathbf{c}_i$. $\blacksquare$
\end{definition}

Throughout the paper, we refer to the vectors $\mathbf{c}_i$ as vector symbols or nodes. Working with systematic codes, it holds that for the systematic nodes $\mathbf{c}_i = \mathbf{d}_i$ for $1\leq i\leq k$ and for the parity nodes $\mathbf{c}_{k+i} = \mathbf{p}_i$ for $1\leq i\leq r$. For every vector code $\mathcal{C} \in \big(\mathbb{F}_q^\alpha \big)^n$ there is an associated scalar linear code $\mathcal{C}^{(s)}$ over $\mathbb{F}_q$ of length $N=\alpha n$. Accordingly, the dimension of the associated scalar code $\mathcal{C}^{(s)}$ is $K = \alpha k$. For a convenient notation, the generator matrix $\mathbf{G}$ of size ${K \times N}$ of the scalar code $\mathcal{C}^{(s)}$ is such that each of the $\alpha$ consecutive columns corresponds to one code symbol $\mathbf{c}_i, i \in [n]$, and they are considered as $n$ thick columns $\mathbf{W}_i,  i \in [n]$. For a subset $\mathcal{I} \subset [n]$ we say that it is an information set for $\mathcal{C}$ if the restriction $\mathbf{G} |_\mathcal{I}$ of $\mathbf{G}$ to the set of thick columns with indexes lying in $\mathcal{I}$ has a full rank, i.e., $\text{rank}(\mathbf{G}|_\mathcal{I}) = K$. 

The minimum cardinality of an information set is referred as quasi-dimension $\kappa$ of the vector code $\mathcal{C}$. As the vector code $\mathcal{C}$ is  $\mathbb{F}_q$-linear, the minimum distance $d_{\min}$ of $\mathcal{C}$ is equal to the minimum Hamming weight of a non-zero codeword in $\mathcal{C}$. Finally, a vector code of block length $n$, scalar dimension $K$, minimum distance $d_{\min}$, vector-length parameter $\alpha$ and quasi-dimension $\kappa$ is shortly denoted with $[n,K,d_{\min},\alpha,\kappa]$. While in the general definition of vector codes in \cite{6846301} the quasi-dimension $\kappa$ does not necessarily divide the dimension $K$ of the associated scalar, for much simpler and convenient description of the codes in this paper we take that $k=\kappa$, i.e., $K=\alpha \kappa$. In that case the erasure and the Singleton bounds are given by: 
\begin{equation}\label{SingletonVector}
d_{\min} \leq n - \kappa + 1.
\end{equation} 
In \cite{5550492}, Dimakis et al. studied the repair problem in a distributed storage system where a file of $M$ symbols from a finite field $\mathbb{F}_q$ is stored across $n$ nodes, each node stores $\frac{M}{k}$ symbols. They introduced the metric repair bandwidth $\gamma$, and proved that the repair bandwidth of a MDS code is lower bounded by
\begin{equation}
\gamma \geq \frac{M}{k}\frac{d}{d-k+1},
\label{optimalBW-bound}
\end{equation}
where $d$ is the number of accessed available nodes (helpers).

\begin{lemma}[\cite{5550492}] The repair bandwidth of a $(n, k)$ MDS code is minimized for $d=n-1$. MSR codes achieve the lower bound of the repair bandwidth equal to
	\begin{equation}
	\gamma_{MSR}^{min}=\frac{M}{k}\frac{n-1}{n-k}.
	\label{optimalMSR}
	\end{equation}
\end{lemma}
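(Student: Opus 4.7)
The plan is to start from the general lower bound stated just before the lemma, namely
$$\gamma \geq \frac{M}{k}\frac{d}{d-k+1},$$
and analyze it as a function of $d$ over the admissible range $k \leq d \leq n-1$. The first step is to rewrite the relevant factor in a form where its monotonicity is transparent, for instance
$$\frac{d}{d-k+1} = 1 + \frac{k-1}{d-k+1}.$$
Since $k \geq 1$ and $d-k+1 > 0$ on the admissible range, this expression is a decreasing function of $d$, so the right-hand side of the lower bound is minimized precisely when $d$ is as large as possible, i.e.\ $d = n-1$. This establishes the first claim of the lemma.

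The second step is to substitute $d = n-1$ into the bound to obtain
$$\gamma \geq \frac{M}{k}\frac{n-1}{(n-1)-k+1} = \frac{M}{k}\frac{n-1}{n-k},$$
which is the expression for $\gamma_{MSR}^{min}$ stated in the lemma. To finish, I would appeal to the definition of MSR codes: MSR codes are, by construction, the MDS codes that sit on the storage-bandwidth tradeoff curve of Dimakis et al.\ at the minimum-storage corner, where the per-node storage is $M/k$ and the repair bandwidth achieves the cut-set bound with equality when all $d = n-1$ surviving nodes are contacted as helpers. Hence the lower bound is tight for MSR codes and the displayed formula is attained.

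There is no real obstacle here; the entire argument is a one-variable monotonicity check followed by substitution, plus a reference to the defining achievability property of the MSR point already established in~\cite{5550492}. The only subtlety worth being explicit about is justifying that $d$ ranges only up to $n-1$ (a helper set of size $n$ would include the failed node itself) and that $d \geq k$ is required for the decoder at the new node to be able to reconstruct anything in the MDS setting; both constraints are standard and ensure that the monotonicity analysis is performed over the correct interval.
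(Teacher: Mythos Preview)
Your argument is correct and is precisely the standard derivation: rewrite $d/(d-k+1)=1+(k-1)/(d-k+1)$, observe monotone decrease in $d$ over $k\le d\le n-1$, and substitute $d=n-1$; then invoke the achievability of the MSR point from \cite{5550492}. There is nothing to compare against here, since the paper does not supply its own proof of this lemma but simply imports it from \cite{5550492}; your write-up is exactly the elementary justification one would expect for this cited fact.
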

\nop{
\begin{lemma}[\cite{5550492}] A $(n, k)$ MSR code attains the minimum storage point of the optimal tradeoff curve between the storage and the repair bandwidth, i.e.,
	\begin{equation}
	(\alpha_{MSR}, \gamma_{MSR}^{min})=(\frac{M}{k}, \frac{M}{k} \frac{n-1}{n-k}),
	\label{optimalMSR}
	\end{equation}
	when $n-1$ helper nodes are contacted.
\end{lemma}
}
A $(n, k)$ MSR code has the maximum possible distance $d_{min}=n-k+1$ in addition to minimizing the repair bandwidth, but it has the worst possible locality.
\begin{corollary} The locality of a $(n, k)$ MSR code is equal to $n-1$.
\end{corollary}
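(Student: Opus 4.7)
The plan is to derive the corollary directly from the preceding lemma by observing that the MSR bound is attained only at the largest admissible value of $d$. Since locality in this setting is the number of helper nodes contacted during the repair of a single failed node, it suffices to argue that an MSR code must take $d=n-1$.

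First I would treat the generic lower bound $\gamma \geq \frac{M}{k}\frac{d}{d-k+1}$ from \eqref{optimalBW-bound} as a function of the number of helpers $d$, where $d$ is restricted to the range $k \leq d \leq n-1$ (one fewer than the total number of nodes, since one has just failed, and at least $k$ because of the MDS reconstruction requirement). A short monotonicity check,
\begin{equation*}
\frac{d}{d-k+1} = 1 + \frac{k-1}{d-k+1},
\end{equation*}
shows that this expression is strictly decreasing in $d$ for $k \geq 2$, and therefore its minimum over the admissible range is attained uniquely at $d = n-1$, yielding the value $\gamma_{MSR}^{min} = \frac{M}{k}\frac{n-1}{n-k}$ stated in \eqref{optimalMSR}.

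Next I would invoke the lemma: an MSR code, by definition, achieves this minimum repair bandwidth. Because the minimum of the lower bound across $d$ is attained strictly at $d = n-1$, any repair scheme contacting fewer than $n-1$ helpers would suffer a bandwidth strictly larger than $\gamma_{MSR}^{min}$ and would hence contradict the MSR property. Thus every single-node repair in an $(n,k)$ MSR code must contact all of the $n-1$ surviving nodes, and the locality equals $n-1$.

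The only subtle step is the strict monotonicity argument, since one needs to rule out the degenerate case $k=1$ (where $\frac{d}{d-k+1} \equiv 1$ and any $d$ is optimal); this would be handled by observing that the MSR regime assumes $k \geq 2$, which is the only case of practical interest in the subsequent constructions. Once that point is settled, the remainder of the proof is a one-line application of the lemma.
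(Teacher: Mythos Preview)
Your argument is correct and follows exactly the line the paper intends: the corollary is stated immediately after Lemma~1 without any proof, so the paper treats it as an immediate consequence of that lemma. Your proposal simply makes explicit the monotonicity of $d/(d-k+1)$ and the resulting uniqueness of the minimizer $d=n-1$, which is precisely the implicit reasoning behind the paper's unproved corollary.
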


Any $[n,K,d_{\min},\alpha,\kappa]$ vector code $\mathcal{C}$ is MDS if and only if its generator matrix can be represented in the form $\mathbf{G} = [\mathbf{I} | \mathbf{P}]$, where the $K \times (N - K)$ parity matrix 
		\begin{equation}\label{parityPart}
		\mathbf{P}=
		\begin{bmatrix}
		\mathbf{G}_{1, 1} & \mathbf{G}_{1, 2} & \ldots & \mathbf{G}_{1, \kappa}\\
		\mathbf{G}_{2, 1} & \mathbf{G}_{2, 2} & \ldots & \mathbf{G}_{2, \kappa}\\
		\vdotswithin{1} & \vdotswithin{\alpha_n} & \ddots & \vdotswithin{{\alpha_n}^{k-1}}\\
		\mathbf{G}_{\kappa, 1} & \mathbf{G}_{\kappa, 2} & \ldots & \mathbf{G}_{\kappa, n-\kappa}\\
		\end{bmatrix},
		\end{equation}	
possesses the property that every square block submatrix of $\mathbf{P}$ is invertible. The $\mathbf{G}_{i,j}$ entries are square sub-matrices of size $\alpha \times \alpha$, and a block submatrix is composed by different entries of $\mathbf{G}_{i,j}$.

In order to analyze codes with local regeneration, Kamath et al. introduced a new family of vector codes called uniform rank-accumulation (URA) codes in \cite{6846301}. They showed that exact-repair MSR codes belong to the class of URA codes.

\begin{definition}\cite[Def. 2]{6846301}
	\label{KamathInformationLocality}
	Let $\mathcal{C}$ be a $[n,K,d_{\min},\alpha,\kappa]$ vector code with a generator matrix $\mathbf{G}$. The code $\mathcal{C}$ is said to have $(l,\delta)$ information locality if there exists a set of punctured codes $\{\mathcal{C}_i \}_{i \in \mathcal{L}}$ of $\mathcal{C}$ with respective supports $\{S_i \}_{i \in \mathcal{L}}$ such that
	\begin{itemize}
		\item $|S_i| \leq l+\delta - 1,$
		\item $d_{\min}(\mathcal{C}_i)\geq \delta,$ and
		\item $\text{rank}(\mathbf{G}|_{\bigcup_{i \in \mathcal{L}}})=K.$
	\end{itemize}
\end{definition}

If we put $\delta=2$ in Def.\ref{KamathInformationLocality}, then we get the definition of information locality introduced by Gopalan et al. \cite{journals/tit/GopalanHSY12}. 
They derived the upper bound for the minimum distance of a $(n, k, d)_q$ code with information locality $l$ for $\delta=2$ as 
	\begin{equation}
	d_{\min} \leq n-k-\left\lceil \frac{k}{l} \right\rceil+2.
	\label{distance1}
	\end{equation}

A general upper bound was derived in \cite{6846301} as
	\begin{equation}
	d_{\min} \leq n-k+1-\left( \left\lceil \frac{k}{l} \right\rceil - 1 \right)(\delta - 1) .
	\label{distance}
	\end{equation}

Huang et al. showed the existence of Pyramid codes that achieve the minimum distance given in (\ref{distance1}) when the field size is big enough \cite{Huang:2013:PCF:2435204.2435207}.
Finally, based on the work by Gopalan et al. \cite{journals/tit/GopalanHSY12} and Pyramid codes by Huang et al. \cite{Huang:2013:PCF:2435204.2435207}, Kamath et al. proposed a construction of codes with local regeneration based on a parity-splitting strategy in \cite{6846301}. 
\shorten{
\begin{figure}
	\begin{minipage}[b]{0.5\linewidth}
		\centering
		\includegraphics[width=8.9cm,height=5cm]{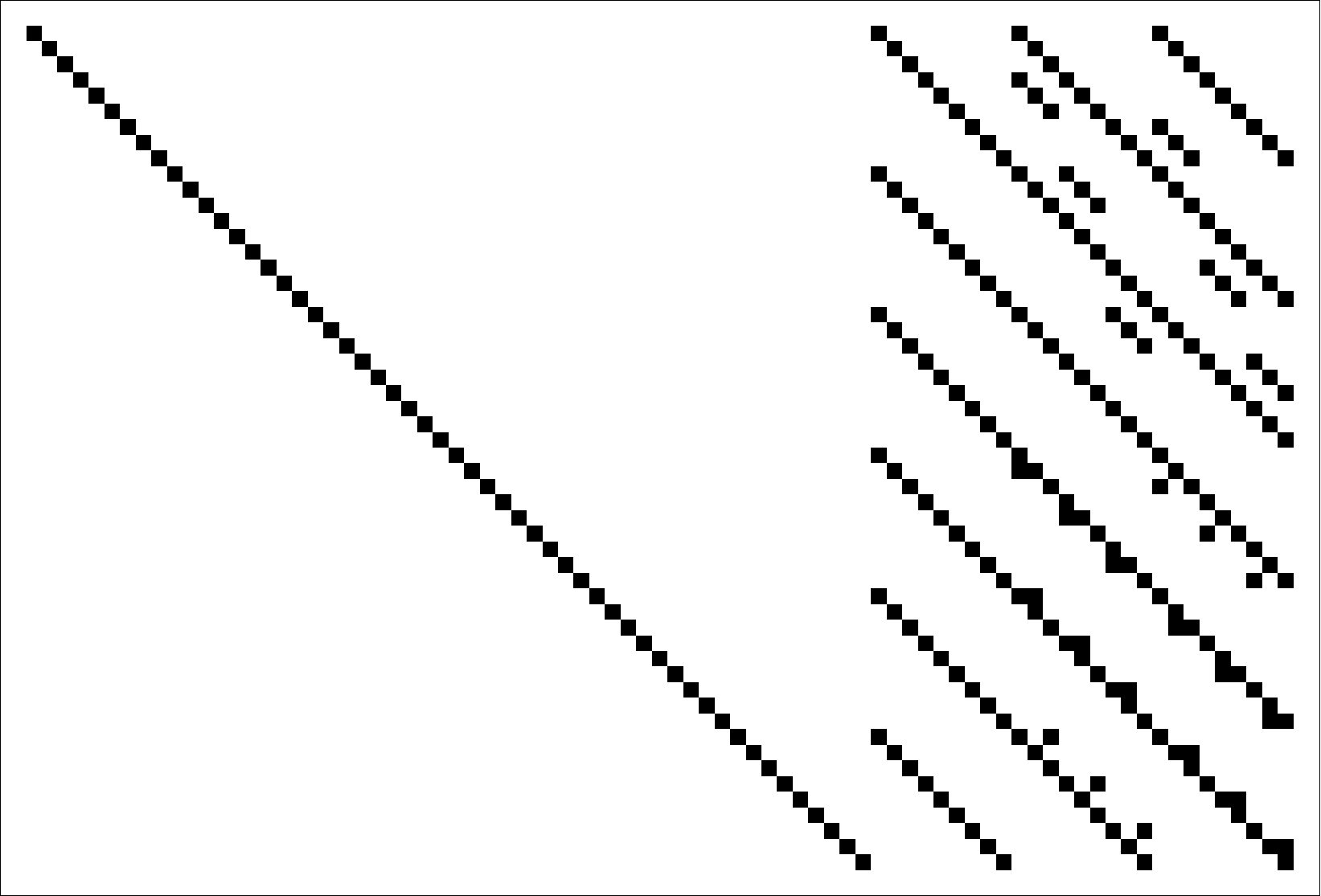}
	\end{minipage}
	\caption{A systematic generator matrix of the associated scalar code. The black squares on the main diagonal represent the value 1, but the black squares in the parity parts represent the non-zero values in $\mathbb{F}_{32}$. Note that if the parity matrix is partitioned in $9\times9$ square submatrices, it has the same form as in equation (\ref{parityPart}).}
	\label{G0906alpha09}
\end{figure}
}

\section{Codes with Local Regeneration from HashTag Codes by Parity-Splitting}\label{paritySplitting}

In \cite{7463553,8025778}, a new class of vector MDS codes called HashTag codes is defined. HashTag codes achieve the lower bound of the repair bandwidth given in (\ref{optimalMSR}) for $\alpha=r^{\lceil\frac{k}{r}\rceil}$, while they have near-optimal repair bandwidth for small sub-packetization levels. HashTag codes are of a great practical importance due to their properties: flexible sub-packetization level, small repair bandwidth, and optimized number of I/O operations.  
We briefly give the basic definition of HashTag codes before we construct codes with local regeneration from them by using the framework of parity-splitting discussed in \cite{6846301}.

\begin{definition}\label{HashTagCodes}
	A $(n,k,d)_q$ HashTag linear code is a vector systematic code defined over an alphabet $\mathbb{F}_q^\alpha$ for some $\alpha \geq  1$. It encodes a vector $\mathbf{x} = (\mathbf{x}_1,\ldots,\mathbf{x}_k)$, where $\mathbf{x}_i = (x_{1,i}, x_{2,i},\ldots,x_{\alpha,i})^T \in \mathbb{F}_q^\alpha$ for $i \in [k] $, to a codeword $\mathcal{C}(\mathbf{x}) = \mathbf{c} = (\mathbf{c}_1, \mathbf{c}_2, \ldots, \mathbf{c}_n)$ where the systematic parts $\mathbf{c}_i=\mathbf{x}_i$ for $i \in [k]$ and the parity parts $\mathbf{c}_i=(c_{1,i}, c_{2,i},\ldots,c_{\alpha,i})^T$ for $i \in [k:n]$ are computed by the linear expressions that have a general form as follows:
	\begin{equation}\label{LinEquations}
	c_{j,i}=\sum f_{\nu,j, i} x_{j_1,j_2},
	\end{equation}
	where $f_{\nu,j, i}\in \mathbb{F}_q$ and the index pair $(j_1,j_2)$ is defined in the $j$-th row of the index array $\mathbf{P}_{i-r}$ where $\nu \in [r]$. The $r$ index arrays $\mathbf{P}_1,\ldots,\mathbf{P}_r$ are defined as follows:
	\begin{equation*}
	\hspace{-2.9cm}
	\mathbf{P}_1=
		\begin{bmatrix}
			(1, 1) & (1, 2) & \ldots & (1, k)\\
			(2, 1) & (2, 2) & \ldots & (2, k)\\
			\vdotswithin{1} & \vdotswithin{\alpha_n} & \ddots & \vdotswithin{{\alpha_n}^{k-1}}\\
			(\alpha, 1) & (\alpha, 2) & \ldots & (\alpha, k)\\
		\end{bmatrix},
	\end{equation*}
	$$\ \ \ \ \ \ \ \ \ \ \ \ \ \ \ \ \ \ \ \ \ \ \ \ \ \ \ \ \ \ \ \ \ \ \ \ \ \ \ \ \ \overbrace{\ \ \ \ \ \ \ \ \ \ \ \ \ \ \ \ \ \ \ \ }^{\lceil \frac{k}{r} \rceil}$$
	\begin{equation*}
	\mathbf{P}_i=
	\begin{bmatrix}
	(1, 1) & (1, 2) & \ldots & (1, k) &  (?, ?) & \ldots & (?, ?) \\
	(2, 1) & (2, 2) & \ldots & (2, k) & (?, ?) & \ldots & (?, ?) \\
	\vdotswithin{1} & \vdotswithin{\alpha_n} & \ddots & \vdotswithin{{\alpha_n}^{k-1}}\\
	(\alpha, 1) & (\alpha, 2) & \ldots & (\alpha, k) & (?, ?) & \ldots & (?, ?) \\
	\end{bmatrix}.
	\end{equation*}
	where the values of the indexes $(?, ?)$ are determined by a scheduling algorithm that guarantees the code is MDS, i.e., the entire information $\mathbf{x}$ can be recovered from any $k$ out of the $n$ vectors $\mathbf{c}_i$. $\blacksquare$
\end{definition}

Algorithm \ref{Alg:Constr} gives a high level description of one scheduling algorithm for Def. \ref{HashTagCodes}. An interested reader is referred to \cite{7463553,8025778} for more details.

\begin{algorithm}
	\small
	\caption{High level description of an algorithm for generating HTEC for an arbitrary sub-packetization level
		\newline
		\textbf{Input:} $n, k, \alpha$;
		\newline
		\textbf{Output:} Index arrays $\mathbf{P}_1, \ldots, \mathbf{P}_r$.}
	\label{Alg:Constr}
	\begin{algorithmic}[1]
		\State{\textbf{Initialization:} $\mathbf{P}_1, \ldots, \mathbf{P}_r$ are initialized as index arrays $\mathbf{P} = ((i,j))_{\alpha \times k}$;}
		\State{Append $\Bigl\lceil \frac{k}{r}\Bigr\rceil$ columns to $\mathbf{P}_2, \ldots, \mathbf{P}_r$ all initialized to $(0, 0)$;}
		\State \# Phase 1
		\State Set the granulation level $run \leftarrow \Bigl\lceil \frac{\alpha}{r}\Bigr\rceil$ and $step \leftarrow 0$;
		\Repeat
		\State \parbox[t]{8cm}{Replace $(0, 0)$ pairs with indexes $(i, j)$ such that both Condition 1 and Condition 2 are satisfied;  \vspace{0.1cm}}
		\State \parbox[t]{8cm}{Decrease the granulation level $run$ by a factor $r$ and $step \leftarrow \Bigl\lceil \frac{\alpha}{r}\Bigr\rceil-run$;}
		\Until{The granulation level $run > 1$}
		\State \# Phase 2
		\State If there are still $(0, 0)$ and unscheduled elements from the systematic nodes, choose $(i, j)$ such that only Condition 2 is satisfied;
		\State Return the index arrays $\mathbf{P}_1, \ldots, \mathbf{P}_r$.
	\end{algorithmic}
\end{algorithm}

\begin{example}\label{Ex:HashTag0906alfa09}
	The linear expressions for the parity parts for a $(9, 6)$ HashTag code with $\alpha=9$ are given here. The way how we obtain them is explained in Section 4.1 in \cite{8025778}. We give one set of coefficients $f_{\nu,j, i}$ for equation (\ref{LinEquations}) from the finite field $\mathbb{F}_{32}$ with irreducible polynomial $x^5+x^3+1$. This code achieves the lower bound of repair bandwidth in (\ref{optimalMSR}), i.e., the repair bandwidth is $\gamma = \frac{8}{3} = 2.67$ for repair of any systematic node.
	Due to the big size $54 \times 81$, the systematic generator matrix of the associated scalar code is presented graphically in Fig. \ref{G0906alpha09} instead of presenting it numerically.

	{\footnotesize 
		\begin{tabular}{llll}
			\renewcommand{\arraystretch}{0.9}
			\hspace{-1.0cm}\begin{tabular}{l@{}l@{}l@{}l@{}l@{}l@{}l@{}l@{}l@{}l@{}l@{}l}
				$c_{1,7}=$ & $\textbf{\ 7}x_{1,1}$ &+&$\textbf{10}x_{1,2}$ &+& $\textbf{18}x_{1,3}$ &+& $\textbf{11}x_{1,4}$ &+& $\textbf{17}x_{1,5}$ &+& $\textbf{\ 6}x_{1,6}$\\
				$c_{2,7}=$ & $\textbf{26}x_{2,1}$ &+& $\textbf{17}x_{2,2}$ &+& $\textbf{25}x_{2,3}$ &+& $\textbf{27}x_{2,4}$ &+& $\textbf{31}x_{2,5}$ &+& $\textbf{\ 4}x_{2,6}$\\
				$c_{3,7}=$ & $\textbf{22}x_{3,1}$ &+& $\textbf{12}x_{3,2}$ &+& $\textbf{27}x_{3,3}$ &+& $\textbf{31}x_{3,4}$ &+& $\textbf{31}x_{3,5}$ &+& $\textbf{23}x_{3,6}$\\
				$c_{4,7}=$ & $\textbf{17}x_{4,1}$ &+& $\textbf{\ 9}x_{4,2}$ &+& $\textbf{14}x_{4,3}$ &+& $\textbf{\ 4}x_{4,4}$ &+& $\textbf{21}x_{4,5}$ &+& $\textbf{25}x_{4,6}$\\
				$c_{5,7}=$ & $\textbf{20}x_{5,1}$ &+& $\textbf{\ 5}x_{5,2}$ &+& $\textbf{\ 5}x_{5,3}$ &+& $\textbf{13}x_{5,4}$ &+& $\textbf{11}x_{5,5}$ &+& $\textbf{16}x_{5,6}$\\
				$c_{6,7}=$ & $\textbf{25}x_{6,1}$ &+& $\textbf{16}x_{6,2}$ &+& $\textbf{30}x_{6,3}$ &+& $\textbf{28}x_{6,4}$ &+& $\textbf{10}x_{6,5}$ &+& $\textbf{24}x_{6,6}$\\
				$c_{7,7}=$ & $\textbf{20}x_{7,1}$ &+& $\textbf{\ 8}x_{7,2}$ &+& $\textbf{21}x_{7,3}$ &+& $\textbf{\ 9}x_{7,4}$ &+& $\textbf{\ 3}x_{7,5}$ &+& $\textbf{25}x_{7,6}$\\
				$c_{8,7}=$ & $\textbf{23}x_{8,1}$ &+& $\textbf{\ 4}x_{8,2}$ &+& $\textbf{12}x_{8,3}$ &+& $\textbf{16}x_{8,4}$ &+& $\textbf{\ 8}x_{8,5}$ &+& $\textbf{17}x_{8,6}$\\
				$c_{9,7}=$ & $\textbf{\ 2}x_{9,1}$ &+& $\textbf{21}x_{9,2}$ &+& $\textbf{\ 8}x_{9,3}$ &+& $\textbf{16}x_{9,4}$ &+& $\textbf{\ 7}x_{9,5}$ &+& $\textbf{25}x_{9,6}$\\
			\end{tabular} & \\
			\ 
			\\
			\renewcommand{\arraystretch}{0.9}
			\centering
			\hspace{-1.0cm}\begin{tabular}{l@{}l@{}l@{}l@{}l@{}l@{}l@{}l@{}l@{}l@{}l@{}l@{}l@{}l@{}l@{}l}
				$c_{1,8}=$ & $\textbf{\ 8}x_{1,1}$ &+&$\textbf{24}x_{1,2}$ &+& $\textbf{21}x_{1,3}$ &+& $\textbf{19}x_{1,4}$ &+& $\textbf{\ 6}x_{1,5}$ &+& $\textbf{20}x_{1,6}$ &+& $\textbf{\ 8} {x_{4,1}}$ &+& $\textbf{\ 6}{x_{2,4}}$\\		
				$c_{2,8}=$ & $\textbf{\ 3}x_{2,1}$ &+& $\textbf{12}x_{2,2}$ &+& $\textbf{\ 6}x_{2,3}$ &+& $\textbf{\ 3}x_{2,4}$ &+& $\textbf{16}x_{2,5}$ &+& $\textbf{10}x_{2,6}$ &+& $\textbf{30}{x_{5,1}}$ &+& $\textbf{24}{x_{1,5}}$\\
				$c_{3,8}=$ & $\textbf{23}x_{3,1}$ &+& $\textbf{20}x_{3,2}$ &+& $\textbf{30}x_{3,3}$ &+& $\textbf{\ 7}x_{3,4}$ &+& $\textbf{16}x_{3,5}$ &+& $\textbf{10}x_{3,6}$ &+& $\textbf{21}{x_{6,1}}$ &+& $\textbf{27}{x_{1,6}}$\\
				$c_{4,8}=$ & $\textbf{14}x_{4,1}$ &+& $\textbf{\ 7}x_{4,2}$ &+& $\textbf{10}x_{4,3}$ &+& $\textbf{14}x_{4,4}$ &+& $\textbf{24}x_{4,5}$ &+& $\textbf{20}x_{4,6}$ &+& $\textbf{16}{x_{1,2}}$ &+& $\textbf{31}{x_{5,4}}$\\
				$c_{5,8}=$ & $\textbf{25}x_{5,1}$ &+& $\textbf{11}x_{5,2}$ &+& $\textbf{29}x_{5,3}$ &+& $\textbf{12}x_{5,4}$ &+& $\textbf{20}x_{5,5}$ &+& $\textbf{24}x_{5,6}$ &+& $\textbf{15}{x_{2,2}}$ &+& $\textbf{\ 6}{x_{4,5}}$\\
				$c_{6,8}=$ & $\textbf{17}x_{6,1}$ &+& $\textbf{27}x_{6,2}$ &+& $\textbf{\ 4}x_{6,3}$ &+& $\textbf{21}x_{6,4}$ &+& $\textbf{15}x_{6,5}$ &+& $\textbf{11}x_{6,6}$ &+& $\textbf{19}{x_{3,2}}$ &+& $\textbf{21}{x_{4,6}}$\\
				$c_{7,8}=$ & $\textbf{19}x_{7,1}$ &+& $\textbf{23}x_{7,2}$ &+& $\textbf{16}x_{7,3}$ &+& $\textbf{\ 4}x_{7,4}$ &+& $\textbf{14}x_{7,5}$ &+& $\textbf{16}x_{7,6}$ &+& $\textbf{\ 9}{x_{1,3}}$ &+& $\textbf{\ 8}{x_{8,4}}$\\
				$c_{8,8}=$ & $\textbf{\ 5}x_{8,1}$ &+& $\textbf{26}x_{8,2}$ &+& $\textbf{22}x_{8,3}$ &+& $\textbf{30}x_{8,4}$ &+& $\textbf{22}x_{8,5}$ &+& $\textbf{21}x_{8,6}$ &+& $\textbf{24}{x_{2,3}}$ &+& $\textbf{26}{x_{7,5}}$\\
				$c_{9,8}=$ & $\textbf{10}x_{9,1}$ &+& $\textbf{\ 8}x_{9,2}$ &+& $\textbf{10}x_{9,3}$ &+& $\textbf{27}x_{9,4}$ &+& $\textbf{28}x_{9,5}$ &+& $\textbf{20}x_{9,6}$ &+& $\textbf{16}{x_{3,3}}$ &+& $\textbf{\ 4}{x_{7,6}}$\\
			\end{tabular} & \\
			\ 
			\\
			\renewcommand{\arraystretch}{0.9}
			\centering
			\hspace{-1.0cm}\begin{tabular}{l@{}l@{}l@{}l@{}l@{}l@{}l@{}l@{}l@{}l@{}l@{}l@{}l@{}l@{}l@{}l}
				$c_{1,9}=$ & $\textbf{20}x_{1,1}$ &+& $\textbf{20}x_{1,2}$ &+& $\textbf{30}x_{1,3}$ &+& $\textbf{17}x_{1,4}$ &+& $\textbf{12}x_{1,5}$ &+& $\textbf{27}x_{1,6}$ &+& $\textbf{28}{x_{7,1}}$ &+& $\textbf{\ 9}{x_{3,4}}$\\
				$c_{2,9}=$ & $\textbf{18}x_{2,1}$ &+& $\textbf{10}x_{2,2}$ &+& $\textbf{20}x_{2,3}$ &+& $\textbf{21}x_{2,4}$ &+& $\textbf{13}x_{2,5}$ &+& $\textbf{\ 7}x_{2,6}$ &+& $\textbf{\ 2}{x_{8,1}}$ &+& $\textbf{\ 6}{x_{3,5}}$\\
				$c_{3,9}=$ & $\textbf{31}x_{3,1}$ &+& $\textbf{25}x_{3,2}$ &+& $\textbf{12}x_{3,3}$ &+& $\textbf{18}x_{3,4}$ &+& $\textbf{15}x_{3,5}$ &+& $\textbf{24}x_{3,6}$ &+& $\textbf{31}{x_{9,1}}$ &+& $\textbf{28}{x_{2,6}}$\\				
				$c_{4,9}=$ & $\textbf{\ 6}x_{4,1}$ &+& $\textbf{16}x_{4,2}$ &+& $\textbf{26}x_{4,3}$ &+& $\textbf{\ 4}x_{4,4}$ &+& $\textbf{21}x_{4,5}$ &+& $\textbf{27}x_{4,6}$ &+& $\textbf{26}{x_{7,2}}$ &+& $\textbf{\ 8}{x_{6,4}}$\\
				$c_{5,9}=$ & $\textbf{\ 7}x_{5,1}$ &+& $\textbf{\ 6}x_{5,2}$ &+& $\textbf{26}x_{5,3}$ &+& $\textbf{\ 6}x_{5,4}$ &+& $\textbf{15}x_{5,5}$ &+& $\textbf{16}x_{5,6}$ &+& $\textbf{28}{x_{8,2}}$ &+& $\textbf{\ 4}{x_{6,5}}$\\
				$c_{6,9}=$ & $\textbf{20}x_{6,1}$ &+& $\textbf{20}x_{6,2}$ &+& $\textbf{12}x_{6,3}$ &+& $\textbf{20}x_{6,4}$ &+& $\textbf{18}x_{6,5}$ &+& $\textbf{26}x_{6,6}$ &+& $\textbf{19}{x_{9,2}}$ &+& $\textbf{30}{x_{5,6}}$\\
				$c_{7,9}=$ & $\textbf{26}x_{7,1}$ &+& $\textbf{\ 2}x_{7,2}$ &+& $\textbf{\ 6}x_{7,3}$ &+& $\textbf{20}x_{7,4}$ &+& $\textbf{17}x_{7,5}$ &+& $\textbf{23}x_{7,6}$ &+& $\textbf{\ 8} {x_{4,3}}$ &+& $\textbf{31}{x_{9,4}}$\\
				$c_{8,9}=$ & $\textbf{20}x_{8,1}$ &+& $\textbf{15}x_{8,2}$ &+& $\textbf{13}x_{8,3}$ &+& $\textbf{20}x_{8,4}$ &+& $\textbf{10}x_{8,5}$ &+& $\textbf{24}x_{8,6}$ &+& $\textbf{31}{x_{5,3}}$ &+& $\textbf{\ 9}{x_{9,5}}$\\
				$c_{9,9}=$ & $\textbf{\ 6}x_{9,1}$ &+& $\textbf{\ 2}x_{9,2}$ &+& $\textbf{31}x_{9,3}$ &+& $\textbf{12}x_{9,4}$ &+& $\textbf{16}x_{9,5}$ &+& $\textbf{30}x_{9,6}$ &+& $\textbf{20}{x_{6,3}}$ &+& $\textbf{13}{x_{8,6}}$\\
			\end{tabular} & \\ \ \\
		\end{tabular}
	}
\end{example}

\begin{figure}
	\begin{minipage}[b]{0.5\linewidth}
		\centering
		\includegraphics[width=8.9cm,height=5cm]{9-6-alfa09}
	\end{minipage}
	\caption{A systematic generator matrix of the associated scalar code. Here the black squares on the main diagonal represent the value 1, but the black squares in the parity parts represent the non-zero values in $\mathbb{F}_{32}$. Note that if the parity matrix is partitioned in $9\times9$ square submatrices, it has the same form as in equation (\ref{parityPart}).}
	\label{G0906alpha09}
\end{figure}

We adapt the parity-splitting code construction for designing codes with local regeneration described in \cite{6846301} for the specifics of HashTag codes. The construction is described in Algorithm \ref{Alg:HashTagLRCConstruction}. For simplifying the description, we take some of the parameters to have specific relations, although it is possible to define a similar construction with general values of the parameters. Namely, we take that $r | k$ and $r | \alpha$. We also take that the parameters for the information locality $(l,\delta)$ are such that $l | k$ and $\delta \leq r$.
\begin{algorithm}
	\caption{Locally Repairable HashTag Codes
		\newline
		\textbf{Input:} A $(n, k)$ HashTag MDS code with a sub-packetization level $\alpha$ with the associated linear parity equations (\ref{LinEquations}), i.e. with the associated systematic generator matrix $\mathbf{G}$. The MDS code can be, but it does not necessarily have to be a MSR code.
		\newline
		\textbf{Input:} The information locality $(l,\delta)$
		\newline
		\textbf{Output:} A generator matrix $\mathbf{G}'$ with information locality $(l,\delta)$
	}
	 \begin{algorithmic}[1]
	 	\State Split $k$ systematic nodes into $l$ disjunctive subsets $S_i, i\in [l]$, where every set has $\frac{k}{l}$ nodes. While this splitting can be arbitrary, take the canonical splitting where $S_1 = \{1,\ldots, \frac{k}{l}\}$, $S_2 = \{\frac{k}{l}+1,\ldots, \frac{2 k}{l}\}$, $\ldots$, $S_{l} = \{\frac{(l-1)k}{l}+1,\ldots, k\}$.  
	 	\State Split each of the $\alpha$ linear equations for the first $\delta - 1$ parity expressions (\ref{LinEquations}) into $l$ sub-summands where the variables in each equation correspond to the elements from the disjunctive subsets.
		\State Associate the obtained $\alpha \times l \times (\delta - 1)$ sub-summands to $l \times (\delta - 1)$ new local parity nodes.
		\State Rename the remaining $r - \delta + 1$ parity nodes that were not split in Step 1 - Step 3 as new global parity nodes.
		\State Obtain a new systematic generator matrix $\mathbf{G}'$ from the local and global parity nodes.
		\State Return $\mathbf{G}'$ as a generator matrix of a $[n,K=k\alpha,d_{\min},\alpha,k]$  vector code with information locality $(l,\delta)$.
	 \end{algorithmic}
 \label{Alg:HashTagLRCConstruction}	
\end{algorithm}

\begin{figure}
	\begin{minipage}[b]{0.5\linewidth}
		\centering
		\includegraphics[width=8.5cm,height=4cm]{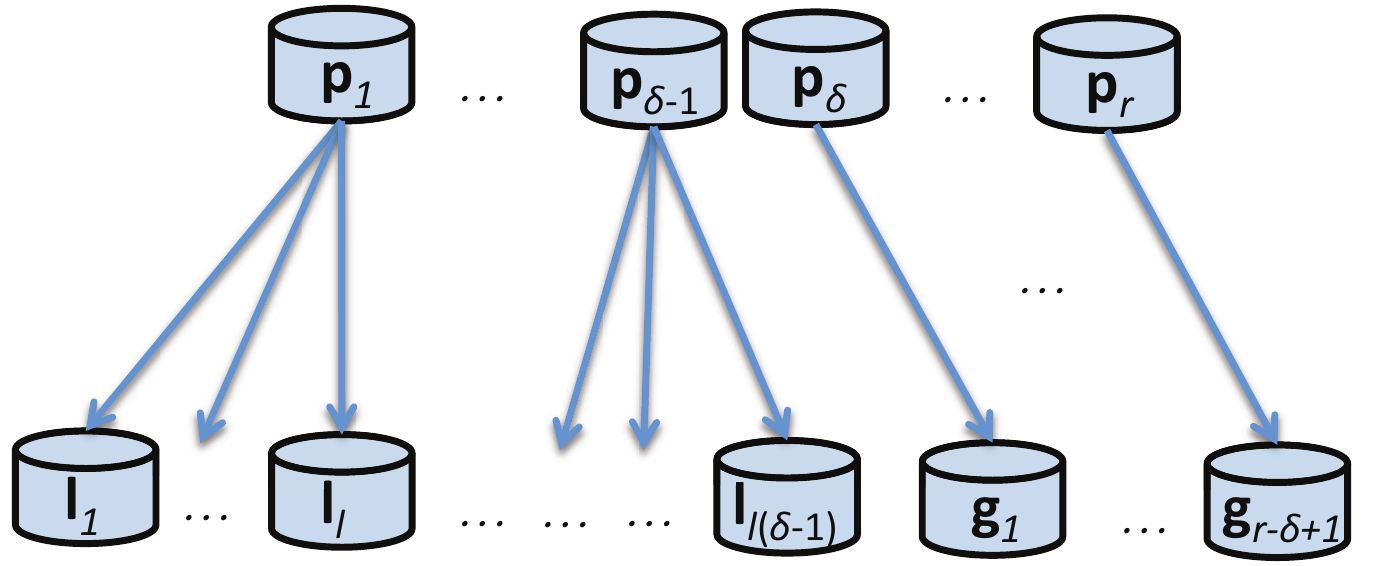}
	\end{minipage}
	\caption{There are $r$ parity nodes from a systematic $(n, k)$ MDS code with a sub-packetization level $\alpha$. The parity splitting technique generates $l$ local parity nodes from every parity node $\mathbf{p}_1,\ldots, \mathbf{p}_{\delta - 1}$ and renames the parity nodes $\mathbf{p}_{\delta},\ldots, \mathbf{p}_{r}$ as global parity nodes $\mathbf{g}_{1},\ldots, \mathbf{g}_{r-\delta+1}$. }	
	\label{framework}
\end{figure}

A graphical presentation of the parity-splitting procedure is given in Fig. \ref{framework}.

\begin{theorem}\label{Thm:LocalMSRHashTag}
	If the used $(n, k)$ MDS HashTag code in Algorithm \ref{Alg:HashTagLRCConstruction} is MSR, then the obtained $[n,K=k\alpha,d_{\min},\alpha,k]$ code with information locality $(l,\delta)$ is a MSR-Local code, where  
	\begin{equation}
	d_{\min} = n-k+1-\left( \frac{k}{l} - 1 \right)(\delta - 1) .
	\label{MinDistanceHashTagLocal}
	\end{equation}
\end{theorem}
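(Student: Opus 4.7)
The plan is to establish two claims: first, that each local code in the construction inherits the MSR property from the parent HashTag code, making the output an MSR-Local code; and second, that the minimum distance achieves equality in the Kamath et al.\ upper bound from equation~(\ref{distance}).

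For the local MSR property, I would analyze the output of Step~2 of Algorithm~\ref{Alg:HashTagLRCConstruction}. Splitting each of the first $\delta-1$ HashTag parity equations into $l$ disjoint sub-summands (one per systematic subset $S_i$) means that, for every fixed $i \in [l]$, the $k/l$ systematic nodes in $S_i$ together with the $\delta-1$ associated local parity nodes form a vector code of block length $k/l + \delta - 1$ and quasi-dimension $k/l$. Its parity structure is exactly the restriction of the HashTag parity matrix $\mathbf{P}$ to the thick columns indexed by $S_i$ and the first $\delta-1$ thick rows of parities. Since every square block submatrix of $\mathbf{P}$ is invertible by the MDS property in~(\ref{parityPart}), every square block submatrix of this restriction is invertible as well, so the local code is MDS with sub-packetization $\alpha$. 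The MSR property of the local code then follows because the single-node repair of a systematic node in $S_i$ can be carried out by the original HashTag MSR repair equations restricted to $S_i$ and its $\delta-1$ local parities, matching the MSR lower bound~(\ref{optimalMSR}) for parameters $(k/l + \delta - 1,\, k/l)$.

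For the minimum distance, the inequality $d_{\min} \leq n-k+1-(k/l - 1)(\delta-1)$ follows from~(\ref{distance}) with $\lceil k/l\rceil = k/l$ (using $l\mid k$), so only achievability remains. I would take an arbitrary erasure pattern $E$ with $|E| \leq n-k-(k/l-1)(\delta-1)$ and first apply local decoding: every local group containing at most $\delta-1$ erasures is recovered in full by its local MDS code. Let $t$ be the number of \emph{bad} local groups (those with at least $\delta$ erasures) and $e_g$ the number of erasures among the $r-\delta+1$ global parities; then $t\delta + e_g \leq |E|$. After local recovery, the remaining unknowns are confined to the bad groups, and the surviving $r - \delta + 1 - e_g$ global parities together with the recovered contributions from the good groups provide enough linear constraints to solve for them, because the global parities inherit the MDS structure of the original HashTag code. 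A direct counting argument tied to the above inequality then shows that the reduced system is always solvable, giving $d_{\min} = n-k+1-(k/l - 1)(\delta-1)$.

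The main obstacle is the achievability step for $d_{\min}$: one must show that after local decoding, for \emph{every} admissible distribution of erasures among systematic, local-parity, and global-parity positions, the remaining linear system on the bad groups is invertible. This ultimately reduces to verifying that the ``mixed'' submatrices obtained by combining surviving global parities with the surviving local parities of the bad groups inherit the invertibility property of $\mathbf{P}$. The divisibility assumptions $r \mid k$, $r \mid \alpha$, $l \mid k$, and $\delta \leq r$ built into Algorithm~\ref{Alg:HashTagLRCConstruction} are what make this counting match the bound with equality rather than strict inequality, and are the place where care is needed to avoid fractional slack.
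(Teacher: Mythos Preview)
The paper does not give a self-contained proof of this theorem: its entire argument is to observe that the divisibility assumptions force $K=m l\alpha$ for some integer $m$ and then to state that ``the proof continues basically as a technical adaptation of the proof of Theorem~5.5 that Kamath et al.\ gave for the pyramid-like MSR-Local codes constructed with the parity-splitting strategy.'' So your proposal is far more detailed than what the paper actually does, and your two-step plan (local MSR property, then minimum-distance achievability via local-then-global decoding) is exactly the structure one would expect when unpacking the Kamath et al.\ reference.

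However, your argument for the local MSR property has a genuine gap. You write that repair of a systematic node in $S_i$ ``can be carried out by the original HashTag MSR repair equations restricted to $S_i$ and its $\delta-1$ local parities, matching the MSR lower bound~(\ref{optimalMSR}) for parameters $(k/l+\delta-1,\,k/l)$.'' But the HashTag MSR repair of a systematic node downloads $\alpha/r$ symbols from \emph{each} of the $n-1$ helpers, and in particular relies on all $r$ parity nodes. If you literally restrict that scheme to the $k/l+\delta-2$ helpers in the local group, you download only $(k/l+\delta-2)\,\alpha/r$ symbols, which for $\delta-1<r$ is strictly \emph{below} the cut-set lower bound $(k/l+\delta-2)\,\alpha/(\delta-1)$ for the local code; hence this restricted scheme cannot recover the node at all. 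In other words, you have correctly argued that the local code inherits the MDS property from the block-invertibility of $\mathbf{P}$ in~(\ref{parityPart}), but the jump from MDS to MSR does not follow by restriction. Showing that each punctured local code is itself MSR at sub-packetization $\alpha$ is exactly the non-trivial content handled by the URA framework in Kamath et al., and is why the paper defers to that reference rather than arguing directly.

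Your minimum-distance argument (decode the good local groups first, then use the surviving global parities on the bad groups) is the standard pyramid-code two-stage decoding and is the right shape; it is essentially what one extracts from the cited Theorem~5.5. The invertibility of the residual ``mixed'' system on the bad groups again comes down to block-submatrix invertibility of the parent code's $\mathbf{P}$, which you correctly identify as the delicate point.
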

\begin{proof} 
	{\rm Since in Algorithm \ref{Alg:HashTagLRCConstruction} we take that $r | k$ and $r | \alpha$, it means that the scalar dimension of the code is $K=m l \alpha$ for some integer $m$. Then the proof continues basically as a technical adaptation of the proof of Theorem 5.5 that Kamath et al. gave for the pyramid-like MSR-Local codes constructed with the parity-splitting strategy in \cite{6846301}.}
\end{proof}
Note that if $\alpha < r^{\frac{k}{r}}$, then HashTag codes are sub-optimal in terms of the repair bandwidth. Consequently, the produced codes with Algorithm \ref{Alg:HashTagLRCConstruction} are locally repairable, but they are not MSR-Local codes.

\begin{example}\label{Ex:HashTag0906alfa09-Split}
Let us split the MSR code given in Example \ref{Ex:HashTag0906alfa09} into a code with local regeneration and with information locality $(l=2,\delta=2)$. In Step 1 we split 6 systematic nodes $\{\mathbf{c}_{1},\ldots, \mathbf{c}_{6} \}$ into $l=2$ disjunctive subsets $S_1 = \{\mathbf{c}_{1}, \mathbf{c}_{2}, \mathbf{c}_{3} \}$ and $S_2 = \{\mathbf{c}_{4}, \mathbf{c}_{5}, \mathbf{c}_{6} \}$. According to Step 2 of Algorithm \ref{Alg:HashTagLRCConstruction}, the first global parity $\mathbf{c}_{7}$ in Example \ref{Ex:HashTag0906alfa09} is split into two local parities $\mathbf{l}_{1}=(l_{1,1}, \ldots,l_{9,1})^T$ and $\mathbf{l}_{2}=(l_{1,2},\ldots,l_{9,2})^T$ as follows:
\\
\\
	{\small
		\begin{tabular}{llll}
			\hspace{-0.5cm}
			\begin{tabular}{l@{}l@{}l@{}l@{}l@{}l@{}l@{}l@{}l@{}l@{}l@{}l}
				$l_{1,1}=$ & $\textbf{\ 7}x_{1,1}$ &+&$\textbf{10}x_{1,2}$ &+& $\textbf{18}x_{1,3}$\\
				$l_{2,1}=$ & $\textbf{26}x_{2,1}$ &+& $\textbf{17}x_{2,2}$ &+& $\textbf{25}x_{2,3}$\\
				$l_{3,1}=$ & $\textbf{22}x_{3,1}$ &+& $\textbf{12}x_{3,2}$ &+& $\textbf{27}x_{3,3}$\\
				$l_{4,1}=$ & $\textbf{17}x_{4,1}$ &+& $\textbf{\ 9}x_{4,2}$ &+& $\textbf{14}x_{4,3}$\\
				$l_{5,1}=$ & $\textbf{20}x_{5,1}$ &+& $\textbf{\ 5}x_{5,2}$ &+& $\textbf{\ 5}x_{5,3}$\\
				$l_{6,1}=$ & $\textbf{25}x_{6,1}$ &+& $\textbf{16}x_{6,2}$ &+& $\textbf{30}x_{6,3}$\\
				$l_{7,1}=$ & $\textbf{20}x_{7,1}$ &+& $\textbf{\ 8}x_{7,2}$ &+& $\textbf{21}x_{7,3}$\\
				$l_{8,1}=$ & $\textbf{23}x_{8,1}$ &+& $\textbf{\ 4}x_{8,2}$ &+& $\textbf{12}x_{8,3}$\\
				$l_{9,1}=$ & $\textbf{\ 2}x_{9,1}$ &+& $\textbf{21}x_{9,2}$ &+& $\textbf{\ 8}x_{9,3}$\\
			\end{tabular} & \ \ & 
			\begin{tabular}{l@{}l@{}l@{}l@{}l@{}l@{}l@{}l@{}l@{}l@{}l@{}l}
				$l_{1,2}=$ & $\textbf{11}x_{1,4}$ &+& $\textbf{17}x_{1,5}$ &+& $\textbf{\ 6}x_{1,6}$\\
				$l_{2,2}=$ & $\textbf{27}x_{2,4}$ &+& $\textbf{31}x_{2,5}$ &+& $\textbf{\ 4}x_{2,6}$\\
				$l_{3,2}=$ & $\textbf{31}x_{3,4}$ &+& $\textbf{31}x_{3,5}$ &+& $\textbf{23}x_{3,6}$\\
				$l_{4,2}=$ & $\textbf{\ 4}x_{4,4}$ &+& $\textbf{21}x_{4,5}$ &+& $\textbf{25}x_{4,6}$\\
				$l_{5,2}=$ & $\textbf{13}x_{5,4}$ &+& $\textbf{11}x_{5,5}$ &+& $\textbf{16}x_{5,6}$\\
				$l_{6,2}=$ & $\textbf{28}x_{6,4}$ &+& $\textbf{10}x_{6,5}$ &+& $\textbf{24}x_{6,6}$\\
				$l_{7,2}=$ & $\textbf{\ 9}x_{7,4}$ &+& $\textbf{\ 3}x_{7,5}$ &+& $\textbf{25}x_{7,6}$\\
				$l_{8,2}=$ & $\textbf{16}x_{8,4}$ &+& $\textbf{\ 8}x_{8,5}$ &+& $\textbf{17}x_{8,6}$\\
				$l_{9,2}=$ & $\textbf{16}x_{9,4}$ &+& $\textbf{\ 7}x_{9,5}$ &+& $\textbf{25}x_{9,6}$\\
			\end{tabular}
		\end{tabular}		
	}
\\
	The remaining two global parities are kept as they are given in Example \ref{Ex:HashTag0906alfa09}, they are only renamed as $\mathbf{g}_{1}=(c_{1,8}, c_{2,8},\ldots,c_{9,8})^T$ and $\mathbf{g}_{2}=(c_{1,9}, c_{2,9},\ldots,c_{9,9})^T$. The overall code is a $(10, 6)$ code or with the terminology from \cite{conf/usenix/HuangSXOCG0Y12} it is a $(6, 2, 2)$ code. $\blacksquare$
\end{example}

\begin{example}\label{Ex:HashTag0906alfa09-Split3}
	Let us split the same MSR code now with parameters $(l=3,\delta=2)$. In Step 1 we split 6 systematic nodes $\{\mathbf{c}_{1},\ldots, \mathbf{c}_{6} \}$ into $l=3$ disjunctive subsets $S_1 = \{\mathbf{c}_{1}, \mathbf{c}_{2}\}$, $S_2 = \{\mathbf{c}_{3}, \mathbf{c}_{4}\}$ and $S_3 = \{\mathbf{c}_{5}, \mathbf{c}_{6}\}$. In Step 2 of Algorithm \ref{Alg:HashTagLRCConstruction}, the first global parity $\mathbf{c}_{7}$ is split into three local parities: $\mathbf{l}_{1}=(l_{1,1}, \ldots,l_{9,1})^T$,  $\mathbf{l}_{2}=(l_{1,2},\ldots,l_{9,2})^T$ and 
	$\mathbf{l}_{3}=(l_{1,3},\ldots,l_{9,3})^T$	as follows:
	\\
	\\
	{\small
	\begin{tabular}{l@{\ }l@{}l@{\ }l@{}l}
		\hspace{-0.5cm}
		\begin{tabular}{l@{}l@{}l@{}l@{}l@{}l@{}l@{}l@{}l@{}l@{}l@{}l}
			$l_{1,1}=$ & $\textbf{\ 7}x_{1,1}$ &+&$\textbf{10}x_{1,2}$    \\
			$l_{2,1}=$ & $\textbf{26}x_{2,1}$  &+& $\textbf{17}x_{2,2}$   \\
			$l_{3,1}=$ & $\textbf{22}x_{3,1}$  &+& $\textbf{12}x_{3,2}$   \\
			$l_{4,1}=$ & $\textbf{17}x_{4,1}$  &+& $\textbf{\ 9}x_{4,2}$  \\
			$l_{5,1}=$ & $\textbf{20}x_{5,1}$  &+& $\textbf{\ 5}x_{5,2}$  \\
			$l_{6,1}=$ & $\textbf{25}x_{6,1}$  &+& $\textbf{16}x_{6,2}$   \\
			$l_{7,1}=$ & $\textbf{20}x_{7,1}$  &+& $\textbf{\ 8}x_{7,2}$  \\
			$l_{8,1}=$ & $\textbf{23}x_{8,1}$  &+& $\textbf{\ 4}x_{8,2}$  \\
			$l_{9,1}=$ & $\textbf{\ 2}x_{9,1}$ &+& $\textbf{21}x_{9,2}$   \\
		\end{tabular} && 
		\begin{tabular}{l@{}l@{}l@{}l@{}l@{}l@{}l@{}l@{}l@{}l@{}l@{}l}
			$l_{1,2}=$ & $\textbf{18}x_{1,3}$  &+& $\textbf{11}x_{1,4}$ \\
			$l_{2,2}=$ & $\textbf{25}x_{2,3}$  &+& $\textbf{27}x_{2,4}$ \\
			$l_{3,2}=$ & $\textbf{27}x_{3,3}$  &+& $\textbf{31}x_{3,4}$ \\
			$l_{4,2}=$ & $\textbf{14}x_{4,3}$  &+& $\textbf{\ 4}x_{4,4}$\\
			$l_{5,2}=$ & $\textbf{\ 5}x_{5,3}$ &+& $\textbf{13}x_{5,4}$ \\
			$l_{6,2}=$ & $\textbf{30}x_{6,3}$  &+& $\textbf{28}x_{6,4}$ \\
			$l_{7,2}=$ & $\textbf{21}x_{7,3}$  &+& $\textbf{\ 9}x_{7,4}$\\
			$l_{8,2}=$ & $\textbf{12}x_{8,3}$  &+& $\textbf{16}x_{8,4}$ \\
			$l_{9,2}=$ & $\textbf{\ 8}x_{9,3}$ &+& $\textbf{16}x_{9,4}$ \\
		\end{tabular} &&
		\begin{tabular}{l@{}l@{}l@{}l@{}l@{}l@{}l@{}l@{}l@{}l@{}l@{}l}
			$l_{1,3}=$ & $\textbf{17}x_{1,5}$  &+& $\textbf{\ 6}x_{1,6}$\\
			$l_{2,3}=$ & $\textbf{31}x_{2,5}$  &+& $\textbf{\ 4}x_{2,6}$\\
			$l_{3,3}=$ & $\textbf{31}x_{3,5}$  &+& $\textbf{23}x_{3,6}$\\
			$l_{4,3}=$ & $\textbf{21}x_{4,5}$  &+& $\textbf{25}x_{4,6}$\\
			$l_{5,3}=$ & $\textbf{11}x_{5,5}$  &+& $\textbf{16}x_{5,6}$\\
			$l_{6,3}=$ & $\textbf{10}x_{6,5}$  &+& $\textbf{24}x_{6,6}$\\
			$l_{7,3}=$ & $\textbf{\ 3}x_{7,5}$ &+& $\textbf{25}x_{7,6}$\\
			$l_{8,3}=$ & $\textbf{\ 8}x_{8,5}$ &+& $\textbf{17}x_{8,6}$\\
			$l_{9,3}=$ & $\textbf{\ 7}x_{9,5}$ &+& $\textbf{25}x_{9,6}$\\
		\end{tabular}
	\end{tabular}		
	}
\\
	The remaining two global parities are kept as they are given in Example \ref{Ex:HashTag0906alfa09}, but they are just renamed as $\mathbf{g}_{1}=(c_{1,8}, c_{2,8},\ldots,c_{9,8})^T$ and $\mathbf{g}_{2}=(c_{1,9}, c_{2,9},\ldots,c_{9,9})^T$. The overall code is a $(11, 6)$ code or with the terminology from \cite{conf/usenix/HuangSXOCG0Y12} it is a $(6, 3, 2)$ code. $\blacksquare$
\end{example}

There are two interesting aspects of Theorem \ref{Thm:LocalMSRHashTag} that should be emphasized: {\bf 1.} We give an explicit construction of an MSR-Local code (note that in \cite{6846301} the construction is existential), and {\bf 2.} Examples \ref{Ex:HashTag0906alfa09-Split} and \ref{Ex:HashTag0906alfa09-Split3} show that the size of the finite field can be slightly lower than the size proposed in \cite{6846301}. Namely, the MSR HashTag code used in our example is defined over $\mathbb{F}_{32}$, while the lower bound in \cite{6846301} suggests the field size to be bigger than $\binom{9}{6} = 84$. We consider this as a minor contribution and an indication that a deeper theoretical analysis can further lower the field size bound given in \cite{6846301}.

\section{Repair Duality}\label{repair}
\begin{theorem}\label{Thm:DoubleNature}
		Let $\mathcal{C}$ be a $(n, k)$ MSR HashTag code with $\gamma_{MSR}^{min}=\frac{M}{k}\frac{n-1}{n-k}$. Further, let $\mathcal{C}'$ be a $[n,K=k\alpha,d_{\min},\alpha,k]$ code with local regeneration and with information locality $(l,\delta)$ obtained by Algorithm \ref{Alg:HashTagLRCConstruction}. If we denote with $\gamma_{Local}^{min}$ the minimum repair bandwidth for single systematic node repair with $\mathcal{C}'$, then  
		\begin{equation}
		\gamma_{Local}^{min}=\min(\frac{M}{k}\frac{\frac{k}{l}+\delta - 2}{\delta - 1}, \frac{M}{k}\frac{n-1}{n-k}).
		\label{optimalBW}
		\end{equation}
\end{theorem}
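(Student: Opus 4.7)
The plan is to exhibit two concrete repair strategies for a failed systematic node, show that their repair bandwidths equal the two arguments of the minimum, and conclude that the repairer can always pick the cheaper one.

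First I would handle the local repair strategy. By the construction in Algorithm \ref{Alg:HashTagLRCConstruction}, the failed systematic node lies in one of the $l$ disjoint groups $S_i$, and this group together with the $\delta-1$ associated local parities forms a punctured code of $\mathcal{C}'$ whose parity equations are exactly the restriction of the first $\delta-1$ MSR parity equations of $\mathcal{C}$ to $S_i$. Since the parent HashTag code is MSR, the same argument used in the proof of Theorem \ref{Thm:LocalMSRHashTag} shows that this local code is itself a $(k/l+\delta-1,\,k/l)$ MSR code over the same alphabet $\mathbb{F}_q^\alpha$. Applying Lemma 1 to the local code, with per-node storage $M/k=\alpha$, $n_{\text{loc}}=k/l+\delta-1$, $k_{\text{loc}}=k/l$, yields
$$\gamma_{\text{loc}}=\frac{\alpha(n_{\text{loc}}-1)}{n_{\text{loc}}-k_{\text{loc}}}=\frac{M}{k}\cdot\frac{k/l+\delta-2}{\delta-1},$$
which is the first argument of the minimum.

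Next I would handle the global repair strategy that mimics the MSR repair of the parent code $\mathcal{C}$. The key observation is that every original parity $\mathbf{p}_j$ that has been split satisfies $\mathbf{p}_j=\sum_{i=1}^{l}\mathbf{l}_i^{(j)}$ by construction. Hence, for any $(\alpha/r)\times\alpha$ repair matrix $\mathbf{A}_j$ prescribed by the original MSR repair schedule, one has $\mathbf{A}_j\mathbf{p}_j=\sum_{i=1}^{l}\mathbf{A}_j\mathbf{l}_i^{(j)}$, so the same MSR repair of $\mathcal{C}$ can be executed on $\mathcal{C}'$ by retrieving the required linear combinations from the local pieces and the unsplit global parities. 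In other words, the parent $(n,k)$ MSR code is embedded in $\mathcal{C}'$ and its repair scheme is still available, giving by Lemma 1
$$\gamma_{\text{glob}}=\frac{M}{k}\cdot\frac{n-1}{n-k},$$
the second argument of the minimum.

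Finally, because the repair process is free to choose either strategy on a failure-by-failure basis, $\gamma_{Local}^{\min}\le\min(\gamma_{\text{loc}},\gamma_{\text{glob}})$. For the matching lower bound I would invoke MSR optimality on both sides: any repair scheme that contacts only nodes in the local group is lower bounded by the MSR repair bandwidth of the local $(k/l+\delta-1,\,k/l)$ code, while any scheme that contacts nodes outside the local group effectively operates on the parent $(n,k)$ MSR code and is lower bounded by $\gamma_{MSR}^{\min}$ from Lemma 1. The main obstacle I expect is the clean formalisation of the second step: one must argue that pushing the MSR repair coefficients across the summation $\mathbf{p}_j=\sum_i\mathbf{l}_i^{(j)}$ and aggregating at the replacement node does not blow up the bandwidth beyond the parent-code bound, which hinges on the linearity of the repair matrices and the additive decomposition produced by parity splitting.
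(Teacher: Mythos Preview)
Your overall plan—exhibit two repair strategies and take the cheaper one—is exactly what the paper does, and your treatment of the local branch matches the paper's computation line for line.

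The global branch, however, has a genuine gap. If you download $\mathbf{A}_j\mathbf{l}_i^{(j)}$ from \emph{all} $l$ local pieces of each of the $\delta-1$ split parities, you are contacting $(k-1)+(r-\delta+1)+l(\delta-1)=n-1+(l-1)(\delta-1)$ helpers, each sending $\alpha/r$ symbols, so the bandwidth becomes
\[
\frac{M}{k}\cdot\frac{n-1+(l-1)(\delta-1)}{n-k},
\]
which strictly exceeds $\frac{M}{k}\cdot\frac{n-1}{n-k}$ whenever $l>1$. Linearity of the repair matrices does not rescue this: summing at the replacement node happens \emph{after} the data has already crossed the network, so it cannot undo the extra download. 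Your ``main obstacle'' is therefore real, and the resolution you sketch is not the right one.

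The fix used (implicitly) in the paper, and spelled out in Example~\ref{Ex:Duality}, is to download from only \emph{one} local piece per split parity, namely the piece $\mathbf{l}_i^{(j)}$ that lives in the failed node's own group $S_i$. For every other group $S_{i'}$ with $i'\neq i$, the term $\mathbf{A}_j\mathbf{l}_{i'}^{(j)}$ is a function solely of systematic nodes in $S_{i'}$, all of which are survivors whose relevant substripes are already being downloaded as part of the original MSR schedule; the replacement node computes those terms locally and adds them to the single downloaded $\mathbf{A}_j\mathbf{l}_i^{(j)}$ to recover $\mathbf{A}_j\mathbf{p}_j$. With this correction the helper count is exactly $n-1$ and the bandwidth matches the claim. (In the paper's wording, the local pieces are ``treated as the virtual global node''—meaning one reads from the local piece in the failed group in place of the original parity, not from all of them.)

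Finally, your lower-bound paragraph is unnecessary here: the paper treats the theorem purely as an achievability statement (the repairer may choose either scheme), and does not attempt to show that no other scheme can do better. Your proposed dichotomy ``contacts only the local group vs.\ contacts outside'' would not give a clean lower bound anyway, since the LRC has more nodes than the parent code and a hybrid scheme is not forced to behave like the parent MSR repair.
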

\begin{proof}
	When repairing one systematic node, we can always treat local nodes as virtual global nodes from which they have been constructed by splitting. Then with the use of other global nodes we have a situation of repairing one systematic node in the original MSR code for which the repair bandwidth is $\frac{M}{k}\frac{n-1}{n-k}$. On the other hand, if we use the MSR-Local code, then we have the following situation. There are $\frac{k}{l}$ systematic nodes in the MSR-Local code, and the total length of the MSR-Local code is $\frac{k}{l} + \delta - 1$. The file size for the MSR-Local code is decreased by a factor $l$, i.e., it is $\frac{M}{l}$. If we apply the MSR repair bandwidth for these values we get:
	$$\frac{\frac{M}{l}}{\frac{k}{l}} \cdot \frac{\frac{k}{l}+(\delta - 1)-1}{\delta - 1} =  \frac{M}{k} \frac{\frac{k}{l}+\delta - 2}{\delta - 1}.$$
\end{proof}
		

Theorem \ref{Thm:DoubleNature} is one of the main contributions of this work: It emphasizes the \emph{repair duality} for repairing one systematic node: by the local and global parity nodes or only by the local parity nodes. We want to emphasize the practical importance of Theorem \ref{Thm:DoubleNature}. Namely, in practical implementations regardless of the theoretical value of $\gamma_{Local}^{min}$, the number of I/O operations and the access time for the contacted parts can be either crucial or insignificant. In those cases an intelligent repair strategy implemented in the distributed storage system can decide which repair procedure should be used: the one with global parity nodes or the one with the local parity nodes.

While the repair bandwidth in equation (\ref{optimalBW}) decreases by increasing the values of $\delta$ and $l$, it comes at the cost of decreasing the rate of the code for introducing the repair locality. We formalize this by the following Proposition.
\begin{proposition}\label{Prop:OutputCodeDimensions}
If the input code in Algorithm \ref{Alg:HashTagLRCConstruction} is $(n, k)$, then the output locally repairable code is \begin{equation}\label{eqn:OutputCodeDimensions}
    (n+ l\times (\delta - 1) -\delta + 1, k).
\end{equation}
\end{proposition}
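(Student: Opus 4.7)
The plan is to prove Proposition 2 by straightforward bookkeeping of the nodes produced at each step of Algorithm 2, since nothing deeper is required: the claim is purely about code length and dimension.

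First, I would fix notation from the algorithm's input: the code $\mathcal{C}$ is $(n,k)$ MDS with $r = n - k$ parity nodes $\mathbf{p}_1,\ldots,\mathbf{p}_r$, and $k$ systematic nodes that remain untouched throughout the procedure. This last observation immediately gives the dimension part of the claim: since Steps 1--5 only modify and rename parity nodes, the $k$ systematic symbols continue to form an information set, so the output code has dimension $k$. This handles the second coordinate of the pair in (\ref{eqn:OutputCodeDimensions}).

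Next I would count the total number of nodes in the output code. The parity nodes are partitioned by Step 2 and Step 4 into two groups: the first $\delta - 1$ parities $\mathbf{p}_1,\ldots,\mathbf{p}_{\delta - 1}$, each of which is split into $l$ local parity nodes by splitting each of its $\alpha$ scalar equations along the $l$ disjoint systematic subsets $S_1,\ldots,S_l$; and the remaining $r - \delta + 1$ parities $\mathbf{p}_{\delta},\ldots,\mathbf{p}_r$, which are simply renamed as global parity nodes $\mathbf{g}_1,\ldots,\mathbf{g}_{r-\delta+1}$. Thus Step 2--3 produce $l(\delta - 1)$ local parity nodes and Step 4 produces $r - \delta + 1 = n - k - \delta + 1$ global parity nodes. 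Adding these counts to the $k$ untouched systematic nodes, the block length of the output code is
\begin{equation*}
k + l(\delta - 1) + (n - k - \delta + 1) = n + l(\delta - 1) - \delta + 1,
\end{equation*}
which matches the first coordinate of (\ref{eqn:OutputCodeDimensions}).

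There is no genuine obstacle here; the only subtle point is making sure that the ``splitting'' in Step 2 really yields $l$ distinct parity nodes per split parity (one per subset $S_i$), rather than some smaller number due to trivial sub-summands. This is guaranteed because the original parity $\mathbf{p}_\nu$ is built from all $k$ systematic nodes (Definition 4), so each of the $l$ sub-summands defined by the canonical partition of Step 1 is a non-trivial vector node over $\mathbb{F}_q^\alpha$, and all $l(\delta - 1)$ produced local parities are distinct by construction. Once this is noted, the proposition reduces to the arithmetic above.
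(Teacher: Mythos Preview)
Your proposal is correct and follows essentially the same bookkeeping as the paper's proof: count the $l(\delta-1)$ local parities from Step~3, the $r-\delta+1$ global parities from Step~4, add the $k$ systematic nodes, and simplify using $r=n-k$. You include a bit more than the paper does (explicitly justifying the dimension $k$ and noting that each sub-summand is non-trivial), but the argument is the same.
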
 
\begin{proof}
     For a given initial code $(n,k)$ the total number of parity nodes in the produced locally repairable codes is a sum of the parity nodes counted in Step 3 and Step 4 in Algorithm \ref{Alg:HashTagLRCConstruction}. The sum is $$l\times (\delta - 1) + r -\delta +1. $$ Thus the total number of nodes in the output locally repairable codes is $$k + l\times (\delta - 1) + r -\delta +1 =  n+ l\times (\delta - 1) -\delta + 1.$$
\end{proof}

The penalty paid by the decremented rate of the final locally repairable code as described in the expression (\ref{eqn:OutputCodeDimensions}) stays linear in $l$ if $\delta=2$, but increases by a multiplicative factor $l\times (\delta - 1)$ if $\delta \geq 3$. That explains the reasons why in practical implementations of locally repairable codes such as those in Windows Azure \cite{conf/usenix/HuangSXOCG0Y12} and the initial definition of locally repairable codes introduced in \cite{journals/tit/GopalanHSY12}, the value of $\delta$ is kept low, i.e., $\delta=2$.

\begin{figure*}[h!]
	\centering
	\subfloat[][]{\includegraphics[width=0.48\textwidth]{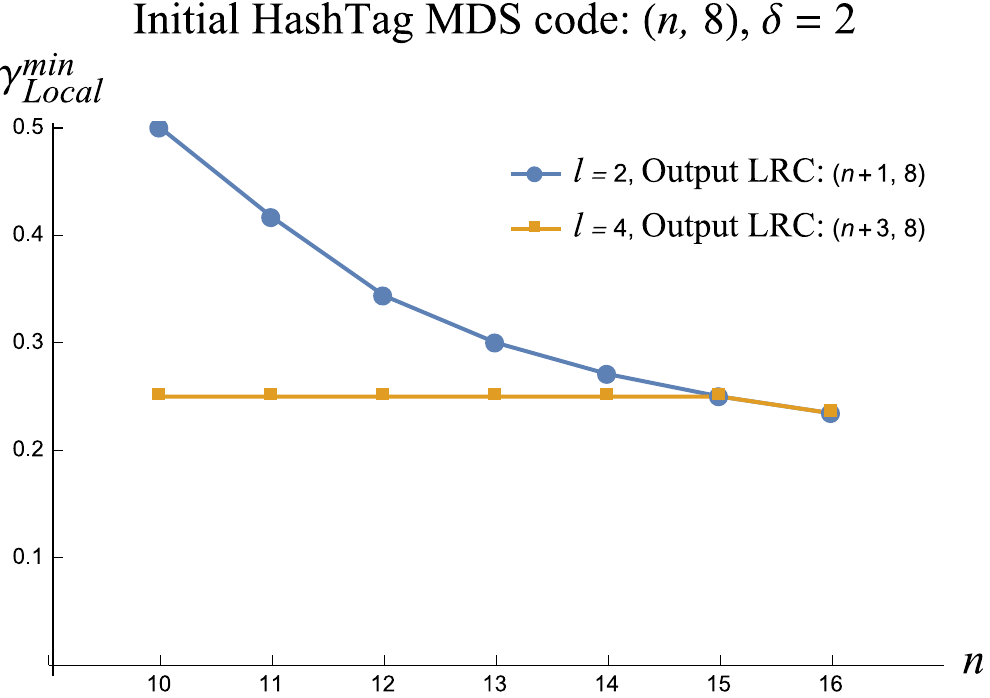}} 
	\hfill
	\subfloat[][]{\includegraphics[width=0.48\textwidth]{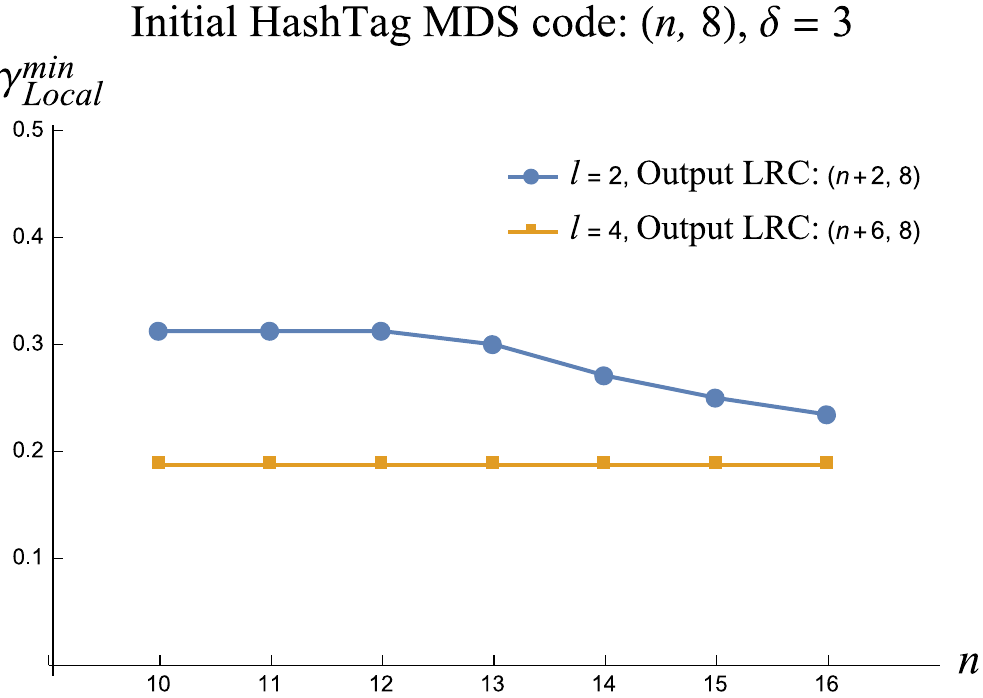}} 
	\caption{Repair bandwidth for one node for different codes where $k=8$. The initial HashTag codes that are input in Algorithm \ref{Alg:HashTagLRCConstruction} are $(n, 8)$ for different values of $n$. The left sub-figure (a) is for $\delta=2$ and the right sub-figure (b) is for $\delta=3$.}
	\label{fig:k8}
\end{figure*} 
\begin{figure*}
	\centering
	\subfloat[][]{\includegraphics[width=0.48\textwidth]{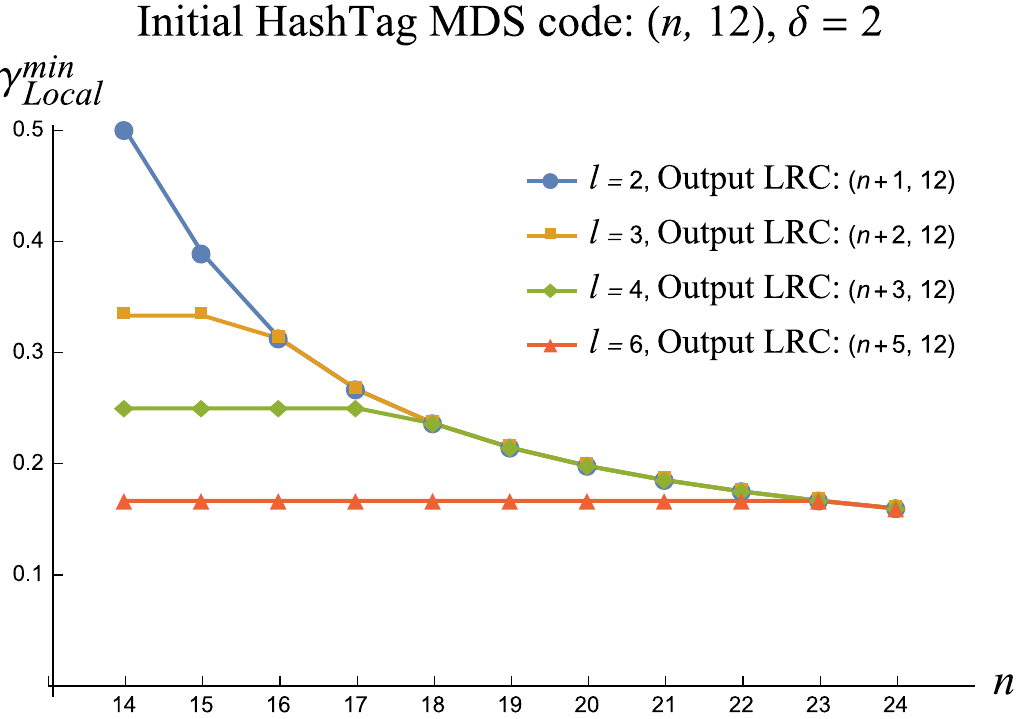}} 
	\hfill
	\subfloat[][]{\includegraphics[width=0.48\textwidth]{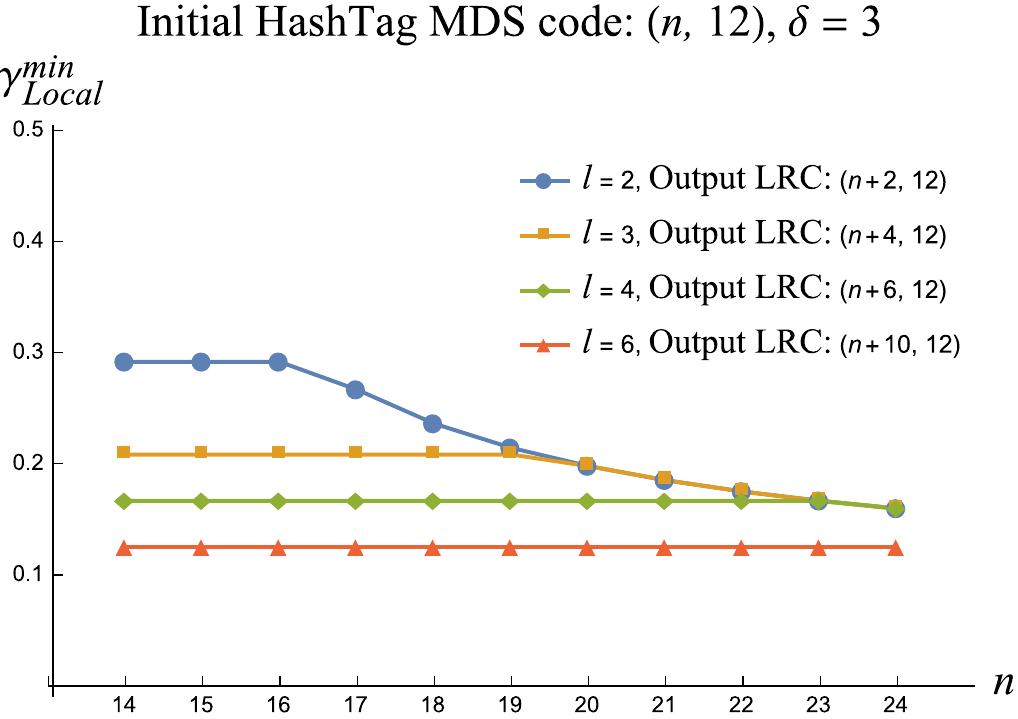}} 
	\caption{Repair bandwidth for one node for different codes where $k=12$. The initial HashTag codes that are input in Algorithm \ref{Alg:HashTagLRCConstruction} are $(n, 12)$ for different values of $n$. The left sub-figure (a) is for $\delta=2$ and the right sub-figure (b) is for $\delta=3$.}
	\label{fig:k12}
\end{figure*} 

To further illustrate different choices for introducing locality and the existence of repair duality for those codes, in Fig. \ref{fig:k8} and Fig. \ref{fig:k12} we plot the values for $\gamma_{Local}^{min}$ from the expression (\ref{optimalBW}) for two families of codes with $k=8$ and $k=12$. We took a normalized value $M=1$. For comparative reasons, on the left sub-figure we plot the values for $\delta=2$ and on the right sub-figure we plot the values for $\delta=3$. While in principle, the values of $l$ do not necessarily need to divide the value of $k$, in Algorithm \ref{Alg:HashTagLRCConstruction} it is very convenient if actually $l$ divides $k$. That is the reason why in Fig. \ref{fig:k8} that is produced for $(n,k)$ codes where $k=8$ we plot the values for codes obtained for $l=2$ and $l=4$. Similarly, the values for $l$ are $l=2, 3, 4$ and $6$ for the codes $(n,12)$ in Fig. \ref{fig:k12}.

The flat lines in Fig. \ref{fig:k8} and Fig. \ref{fig:k12} mean that the repair bandwidth in expression (\ref{optimalBW})  is achieved by using the local parity nodes, i.e., the value is  $\frac{M}{k}\frac{\frac{k}{l}+\delta - 2}{\delta - 1}$. Since this expression does not depend on $n$, its value is a constant for different values of $n$. However, in cases when the minimum in expression (\ref{optimalBW})  is achieved by using the global parity nodes, the repair bandwidth is $\frac{M}{k}\frac{n-1}{n-k}$ and the plot of the values $\gamma_{Local}^{min}$ has a decreasing shape for increasing values of $n$.

Another important aspect that is illustrated by Fig. \ref{fig:k8} and Fig. \ref{fig:k12} is the price that is paid for achieving a low repair bandwidth with locally repairable codes. For example let us take an initial HashTag code $(n, k) = (10, 8)$. A LRC code produced with information locality $(l=2,\delta=2)$ is a $(n+1, 8) = (11, 8)$ code and has a repair bandwidth of $0.5$. By using information locality $(l=4,\delta=2)$, the LRC code is a $(n+3, 8) = (13, 8)$ code, but the repair bandwidth drops down to $0.25$. The situation with $\delta=3$ decreases the repair bandwidth further, but worsens the code rate as well. In Fig. \ref{fig:k8}(b) an initial HashTag code $(n, k) = (10, 8)$ produces a LRC code $(n+2, 8) = (12, 8)$ with a repair bandwidth of $0.3125$ for $l=2$, and produces a LRC code $(n+6, 8) = (16, 8)$ with a repair bandwidth of just $0.1875$ for $l=4$.

We illustrate the benefits of having a choice how the repair is done (either by the local nodes or by the global nodes) by the following practical example.

\begin{example}\label{Ex:Duality}
	Let us consider the $(9, 6)$ MSR HashTag code given in Example \ref{Ex:HashTag0906alfa09} and its corresponding local variant from Algorithm \ref{Alg:HashTagLRCConstruction} with information locality $(l=2,\delta=2)$ given in Example \ref{Ex:HashTag0906alfa09-Split}. That means that the code with local regeneration has 6 systematic nodes, 2 local and 2 global parity nodes. 
	
	Let us analyze the number of reads when we recover one unavailable systematic node. If we recover with the local nodes, then we have to perform 3 sequential reads, reading the whole data in a contiguous manner from 3 nodes. If we repair the unavailable data with the help of both local and global parity nodes, it reduces to the case of recovery with a MSR code, where the number of sequential reads is between 8 and 24 (average 16 reads) but the amount of transferred data is equivalent to 2.67 nodes. 
	
	More concretely, let us assume that we want to recover the node $\mathbf{x}_1 = (x_{1,1}, x_{2,1},\ldots,x_{9,1})^T$. 
	\begin{enumerate}
		\item For a recovery only with the local parity $\mathbf{l}_{1}$, 3 sequential reads of $\mathbf{l}_{1}$, $\mathbf{x}_{2}$ and $\mathbf{x}_{3}$ are performed.
		\item For a recovery with the local and global parities:
		\begin{enumerate}
			\item First, read $l_{1,1}$,  $l_{2,1}$ and $l_{3,1}$ from $\mathbf{l}_{1}$, and $x_{1,2}$,  $x_{2,2}$ and $x_{3,2}$ from $\mathbf{x}_{2}$ and $x_{1,3}$, $x_{2,3}$ and $x_{3,3}$ from $\mathbf{x}_{3}$ to recover $x_{1,1}$,  $x_{2,1}$ and $x_{3,1}$.
			\item Additionally, read $x_{1,4}$,  $x_{2,4}$ and $x_{3,4}$ from $\mathbf{x}_{4}$ and $x_{1,5}$,  $x_{2,5}$ and $x_{3,5}$ from $\mathbf{x}_{5}$ and $x_{1,6}$, $x_{2,6}$ and $x_{3,6}$ from $\mathbf{x}_{6}$.
			\item Then, read $c_{1,8}$,  $c_{2,8}$ and $c_{3,8}$ from the global parity $\mathbf{g}_{1}$ to recover $x_{4,1}$,  $x_{5,1}$ and $x_{6,1}$.
			\item Finally, read $c_{1,9}$,  $c_{2,9}$ and $c_{3,9}$ from the global parity $\mathbf{g}_{2}$ to recover $x_{7,1}$,  $x_{8,1}$ and $x_{9,1}$.
		\end{enumerate}
	\end{enumerate}

	Now, let a small file of 54 KB be stored across 6 systematic, 2 local and 2 global parity nodes. The sub-packetization level is $\alpha=9$, thus every node stores 9 KB, sub-packetized in 9 parts, each of size 1 KB. If the access time for starting a read operation is approximately the same as transferring 9 KB, then repairing with local and global parity nodes is more expensive since we have to perform in average 12 reads, although the amount of transferred data is equivalent to 2.67 nodes.
	
	On the other hand, let us have a big file of 540 MB stored across 6 systematic nodes and 2 local and 2 global parity nodes. The sub-packetization level is again $\alpha=9$, thus every node stores 90 MB, sub-packetized in 9 parts, each of size 10 MB. The access time for starting a read operation is again approximately the same as transferring 9 KB, which is insignificant in comparison with the total amount of transferred data in the process of repairing of a node. In this case, it is better to repair a failed node with local and global parity nodes since it requires a transfer of 240 MB versus the repair just with local nodes that requires a transfer of 270 MB.
\end{example}

\section{HashTag LRC Codes With Efficient Recovery of a Global Parity Node} \label{global}
HashTag codes as well as different MSR codes \cite{7902203} and LRC codes \cite{journals/tit/GopalanHSY12,conf/infocom/OggierD11,6195703} have the property that the bandwidth for a recovery of a failed parity node (global parity nodes in the case of LRC) is equal as in Reed-Solomon codes. That means the recovery of a failed parity node is not optimal and requires a bandwidth of $k$ nodes. 

The problem of recovery of a parity node that is optimal and achieves the MSR bound was recently solved by Tian et al. in \cite{8006804}. Their approach is to take any $(n, k)$ MDS code, where $r = n - k$, and then to increase the sub-packetization level by a factor $r$. Thus, by producing a HashTag code with Algorithm \ref{Alg:Constr}, and applying first the technique  from \cite{8006804}, and then the parity-splitting technique defined in Algorithm \ref{Alg:HashTagLRCConstruction} we can construct HashTag LRC codes that can efficiently recover a global parity node. However, it comes at the cost of an increased sub-packetization level from $\alpha$ to $r \alpha$.

Here, for $\delta = 2$, i.e., for codes with information locality $(l, 2)$ we present another approach in Algorithm \ref{Alg:HashTagLRCWithEfficientGlobalParityRecovery} that does not increase the sub-packetization level $\alpha$ of the initial HashTag code.

\begin{algorithm}
	\caption{Locally Reparable HashTag Codes with Efficient Recovery of a Global Parity Node
		\newline
		\textbf{Input:} Number of data nodes $k$, information locality $(l, 2)$ and number of global nodes $g$.
		\newline
		\textbf{Output:} A generator matrix $\mathbf{G'}$ with information locality $(l,2)$, and sub-packetization level $\alpha = g$.
	}
	\begin{algorithmic}[1]
		\State With Algorithm \ref{Alg:Constr} produce a MDS HashTag code with $k$ data nodes, $r = g + l - 1$ parity nodes and $\alpha = g$ substripes.
		\State 	Get the corresponding index arrays $P_1,\ldots, P_r$.
		\State Set global parity nodes $\{\mathbf{g}_1,\ldots,\mathbf{g}_\alpha \}$ = $\{\mathbf{p}_2, \ldots, \mathbf{p}_r\}$.
		\State Set the substripes of the node $\mathbf{g}_i$ as $\mathbf{g}_{i}=(g_{1,i}, \ldots,g_{\alpha,i})^T$.
		\State Set the matrix of all global substripes 		
		$$
		\mathbf{G}_{\alpha \times \alpha}[g_{i,j}] =
		\begin{bmatrix}
		g_{1, 1} & g_{1, 2} & \ldots & g_{1, \alpha}\\
		g_{2, 1} & g_{2, 2} & \ldots & g_{2, \alpha}\\
		\vdotswithin{1} & \vdotswithin{\alpha_n} & \ddots & \vdotswithin{{\alpha_n}^{k-1}}\\
		g_{\alpha, 1} & g_{\alpha, 2} & \ldots & g_{\alpha, \alpha}\\
		\end{bmatrix}
		$$	
		\State Split $k$ systematic nodes into $l$ disjunctive subsets $S_i, i\in [l]$, where every set has $\frac{k}{l}$ nodes. While this splitting can be arbitrary, take the canonical splitting where $S_1 = \{1,\ldots, \frac{k}{l}\}$, $S_2 = \{\frac{k}{l}+1,\ldots, \frac{2 k}{l}\}$, $\ldots$, $S_{l} = \{\frac{(l-1)k}{l}+1,\ldots, k\}$.  
		\State Split each of the $\alpha$ linear equations for the first parity expression in (\ref{LinEquations}) into $l$ sub-summands where the variables in each equation correspond to the elements from the disjunctive subsets.
		\State Associate the obtained $\alpha \times l $ sub-summands to $l $ new local parity nodes.
		\State From $\mathbf{G}_{\alpha \times \alpha}[g_{i,j}]$ obtain a new systematic generator matrix $\mathbf{G'}_{\alpha \times \alpha}[g'_{i,j}] $ as follows: \begin{equation}\label{eq:NewGMatrix}
		g'_{i,j} = \left\{
		\begin{array}{cl}
		g_{i,j} & \text{if } i = j,\\
		f_{i, j, 1} g_{i,j} + f_{i, j, 2} g_{j,i} & \text{if } i < j, \\
		f_{i, j, 3} g_{i,j} + f_{i, j, 4} g_{j,i} & \text{if } i > j, \\
		\end{array} \right.
		\end{equation}
		where the coefficients $f_{i, j, 1}, \ldots, f_{i, j, 4} \in \mathbb{F}_q$ form $2\times 2$ non-singular matrices 
		$
		\begin{bmatrix}
		f_{i, j, 1} & f_{i, j, 2} \\
		f_{i, j, 3} & f_{i, j, 4} \\
		\end{bmatrix}
		$.
		\State Return $\mathbf{G'}$ as a generator matrix of a $[k+l+g, K=k g, d_{\min}, g, k]$ vector code with information locality $(l,2)$.
	\end{algorithmic}
	\label{Alg:HashTagLRCWithEfficientGlobalParityRecovery}	
\end{algorithm}

\begin{theorem}
	The bandwidth for repair of one global node produced by Algorithm \ref{Alg:HashTagLRCWithEfficientGlobalParityRecovery} is equal to the bandwidth for repair of one data node of a MDS HashTag produced by Algorithm \ref{Alg:Constr}.
\end{theorem}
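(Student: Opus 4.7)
The plan is to exhibit an explicit repair procedure for a single failed global node $\mathbf{g}_j$ and to match its read count, symbol by symbol, with the data-node repair of the underlying MDS HashTag code from Step~1 of Algorithm~\ref{Alg:HashTagLRCWithEfficientGlobalParityRecovery}. The construction~(\ref{eq:NewGMatrix}) couples the mirror-symmetric pair $(g_{i,j},g_{j,i})$ through an invertible $2\times 2$ matrix, and this pairing is precisely what enables the efficient global-node repair.

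I would first observe that the $i$-th substripe of $\mathbf{g}_j$ equals $g_{j,j}$ when $i=j$ and, when $i\neq j$, is an invertible linear combination of $g_{i,j}$ and $g_{j,i}$. Hence $\mathbf{g}_j$ is fully determined by the two sets
\[
\mathcal{A}_j=\{g_{i,j}\}_{i=1}^{\alpha}, \qquad \mathcal{B}_j=\{g_{j,i}\}_{i\neq j}.
\]
The proposed procedure reads from each surviving global node $\mathbf{g}_i$, $i\neq j$, exactly the single substripe $g'_{j,i}$, contributing $\alpha-1$ scalars; each such read produces one linear equation in the pair $(g_{i,j},g_{j,i})$. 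The elements of $\mathcal{B}_j$ together with $g_{j,j}$ are then computed directly: by (\ref{LinEquations}), each $g_{j,i}$ is a fixed linear combination of the data substripes listed in the $j$-th row of $\mathbf{P}_{i+1}$, and the $j$-th substripe of the split parity $\mathbf{p}_1$ is the sum of the $j$-th substripes of the $l$ local parities produced in Steps~6--8. Solving the $2\times 2$ systems yields $\mathcal{A}_j$, and (\ref{eq:NewGMatrix}) then assembles the whole column $\mathbf{g}_j$.

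The bandwidth count then decomposes into three pieces: (i) $\alpha-1$ symbols from the surviving global nodes, (ii) the $l$ local-parity symbols that assemble $p_{j,1}$, and (iii) all data substripes involved in the $j$-th rows of the index arrays $\mathbf{P}_1,\ldots,\mathbf{P}_r$. Under the natural bijection $\mathbf{p}_{i+1}\leftrightarrow\mathbf{g}_i$ for $i\geq 1$ and $\mathbf{p}_1\leftrightarrow(\text{local parities})$, this is exactly the set of symbols accessed by a single-data-node repair of $\mathbf{x}_j$ in the MDS HashTag code produced in Step~1: the data substripes read by both processes form the union of the positions flagged by the $j$-th rows of the $\mathbf{P}_\nu$, and the parity-side reads of the MDS repair are in one-to-one correspondence with the global-plus-local-side reads of the LRC repair. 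Hence the two repair bandwidths coincide.

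The main obstacle is the final bookkeeping: to verify rigorously, at the index-array level, that the symbols counted in (i)--(iii) are in bijection with those read by Algorithm~\ref{Alg:Constr}'s data-node repair of $\mathbf{x}_j$. This relies on the scheduling invariants (Conditions~1 and~2 in Phase~1 of Algorithm~\ref{Alg:Constr}) that control where the coupled $(?,?)$ entries appear in the $j$-th rows of the $\mathbf{P}_\nu$, and must be checked combinatorially to rule out any double counting or stray symbol. Once this verification is in place, the bandwidth equality is immediate from the decomposition above.
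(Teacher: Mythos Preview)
Your approach is essentially the paper's: both proofs exploit the invertible $2\times 2$ coupling in~(\ref{eq:NewGMatrix}) so that reading the single substripe $g'_{j,i}$ from every surviving global node, together with a direct computation of the row-$j$ values $g_{j,i}$ from the data substripes indexed by the $j$-th rows of the $\mathbf{P}_\nu$, determines the whole failed column. The paper presents this substripe-by-substripe (recover $g'_{1,1}$, then $g'_{2,1}$, \ldots), while you package the same computation into the sets $\mathcal{A}_j,\mathcal{B}_j$ and a three-part bandwidth decomposition; the underlying reads are identical.

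One point where you diverge from the paper: your piece~(ii), the $l$ local-parity symbols used to reassemble $p_{j,1}$, does not appear in the paper's repair at all. The mixing matrix $\mathbf{G}_{\alpha\times\alpha}$ in Step~5 is built only from $\mathbf{p}_2,\ldots,\mathbf{p}_r$, so the paper recovers $\mathbf{g}_j$ using only data reads and the $\alpha-1$ surviving global reads, never touching $\mathbf{p}_1$ or the local parities. If you keep piece~(ii) in your bijection, you are charging the LRC side $l$ symbols against a single $\mathbf{p}_1$-read on the HashTag side, which breaks exact equality for $l>1$; you should drop~(ii) and instead argue that the symbol the HashTag data-repair would have drawn from $\mathbf{p}_1$ is compensated on the global-repair side by the fact that one of the $r$ parity columns (namely $\mathbf{g}_j$ itself) is the failed node. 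Your closing caveat about the combinatorial bookkeeping is well placed: neither your outline nor the paper's sketch actually carries out that index-array verification, and it is exactly there that the substripe counts must be lined up.
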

\begin{proof}
	Without a loss of generality let us assume that the lost global parity node is $\mathbf{g}_{1}=(g'_{1,1}, \ldots,g'_{\alpha,1})^T$. From the relations (\ref{eq:NewGMatrix}) it follows that we can reconstruct $g'_{1,1}$ by reading the first $k$ substripes from all $k$ data nodes $x_{1,1}, \ldots, x_{\alpha,1} $ plus one element $x_{\mu, \nu}$ where the concrete values for $\mu$ and $\nu$ are obtained from the output of Algorithm \ref{Alg:Constr}. That is the same amount of bandwidth as with the recovery of the first substripe of a data node in the original HashTag. 
	
	For recovery of the substripe $g'_{2,1}$ we use equation (\ref{eq:NewGMatrix}) and we read the substripe $g'_{1,2}$ from the second global node. Since we already have read the values $x_{1,1}, \ldots, x_{\alpha,1} $, by using the coefficients $f_{1, 2, 1}$, $f_{1, 2, 2}$, $f_{1, 2, 3}$ and $f_{1, 2, 4}$ and possibly reading one extra stripe element $x_{\mu, \nu}$ where the concrete values for $\mu$ and $\nu$ are obtained from the output of Algorithm \ref{Alg:Constr} we can compute the values $g_{i,j}$ and  $g_{j,i}$, i.e., we compute the value $g'_{2,1} = f_{1, 2, 3} g_{1,2} + f_{1, 2, 4} g_{2,1}$. Total number of substripes read in this step is 2 (the same as in the original HashTag algorithm when recovering the second substripe of a data node).
	
	The procedure then continues until the last substripe $g'_{\alpha,1}$. In every step the amount of substripes read is the same as in the original HashTag code when recovering a data node.
\end{proof}

\section{Experiments in Hadoop}\label{Hadoop}
The repair duality discussed in Example \ref{Ex:Duality} of previous section was mainly influenced by one system characteristic: the access time for starting a read operation. In different environments of distributed storage systems there are several similar system characteristics that can affect the repair duality and its final optimal procedure. We next discuss this matter for Hadoop.

Hadoop is an open-source software framework used for distributed storage and processing of big data sets \cite{white2012hadoop}. From release 3.0.0-alpha2 Hadoop offers several erasure codes such as $(5,3)$, $(9, 6)$ and $(14, 10)$ Reed-Solomon (RS) codes. Hadoop Distributed File System (HDFS) has the concepts of \emph{Splits} and \emph{Blocks}. A Split is a logical representation of the data while a Block describes the physical alignment of data. Splits and Blocks in Hadoop are user defined: a logical split can be composed of multiple blocks and one block can have multiple splits. All these choices determine in a more complex way the access time for I/O operations.

To verify the performance of HashTag codes and their locally repairable and locally regenerating variants we implemented them in C/C++ and used them in HDFS. 

For the code $(9, 6)$ we used one NameNode, nine DataNodes, and one client node. All nodes had a size of 50 GB and were connected with a local network of 10 Gbps. The nodes were running on Linux machines equipped with Intel Xeon E5-2676 v3 running on 2.4 GHz. We have experimented with different block sizes (90 MB and 360 MB), different split sizes (512 KB, 1 MB and 4 MB) and different sub-packetization levels ($\alpha=1, 3, 6,$ and $9$) in order to check how they affect the repair time of one lost node. The measured times to recover one node are presented in Fig. \ref{9_6_Hadoop01}. Note that the sub-packetization level $\alpha=1$ represents the RS code that is available in HDFS, while for every other $\alpha = 3, 6, 9$ the codes are HashTag codes. In all measurements HashTag codes outperform RS. The cost of having significant number of I/O operations for the sub-packetization level $\alpha=9$ is the highest for the smallest block and split size (Block size of 90 MB and Split size of 512 KB). This is shown by yellow bars. As the split sizes increase, the disadvantage of bigger number of I/Os due to the increased sub-packetization diminishes, and the repair time decreases further (red and blue bars). 
\begin{figure}[h!]
	\includegraphics[width=3.6in]{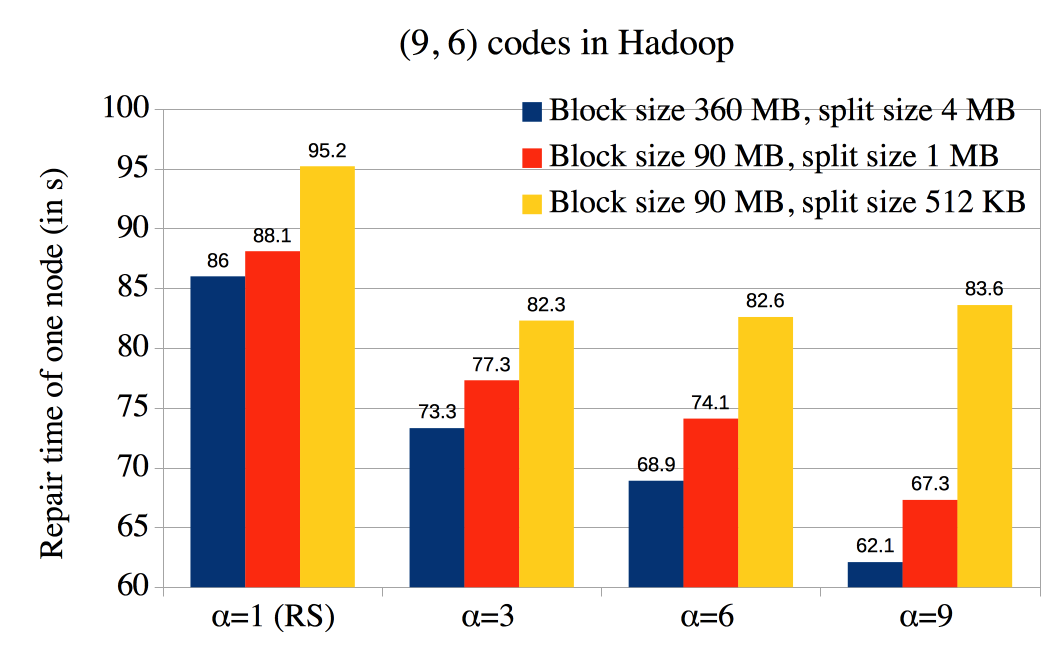}
	\caption{Experiments in HDFS. Time to repair one lost node of 50 GB with a $(9, 6)$ code for different sub-packetization levels $\alpha$. Note that the RS code for $\alpha=1$ is available in the latest release 3.0.0-alpha2 of Apache Hadoop.}
	\label{9_6_Hadoop01}
\end{figure}

In Fig. \ref{9_6_HadoopAlpha9}, we compare the repair times for one lost node of 50 GB with the codes from Examples \ref{Ex:HashTag0906alfa09-Split} and \ref{Ex:HashTag0906alfa09-Split3}. The cost of bigger redundancy with the locally repairable code $(10,6)$ (which is also a locally regenerative code since the sub-packetization level is $\alpha=9$) with $(l=2,\delta=2)$ (the red bar) is still not enough to outperform the ordinary $(9, 6)$ HashTag MSR code (the blue bar). However, paying even higher cost by increasing the redundancy for the other locally regenerative code $(11,6)$ with $(l=3,\delta=2)$ (the yellow bar) finally manages to outperform the repairing time for the ordinary $(9, 6)$ HashTag MSR code.
\begin{figure}[h!]
	\includegraphics[width=3.6in]{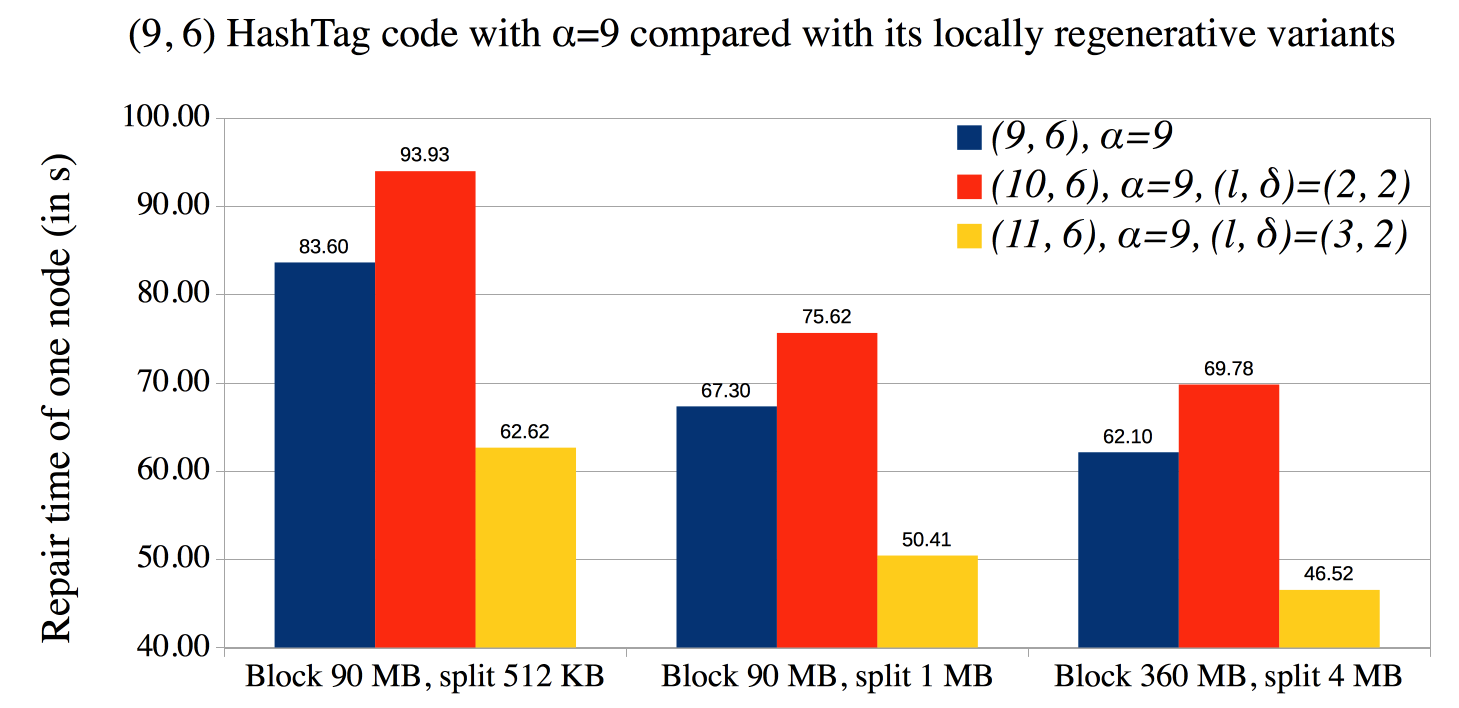}
	\caption{Comparison of repair times for one lost node of 50 GB for an ordinary $(9, 6)$ HashTag MSR code ($\alpha=9$), and its locally regenerative variants with $(l=2,\delta=2)$ and $(l=3,\delta=2)$.}
	\label{9_6_HadoopAlpha9}
\end{figure}

The situation of comparing the slightly less optimal $(9, 6)$ HashTag code with $\alpha=6$ with its locally repairable variants with $(l=2,\delta=2)$ and $(l=3,\delta=2)$ is different, and  this is presented in Fig. \ref{9_6_HadoopAlpha6}. In this case, all variants of locally repairable codes outperform the original HashTag code, i.e., they repair a failed node in shorter time than the original HashTag code from which they were constructed. 
\begin{figure}[h!]
	\includegraphics[width=3.6in]{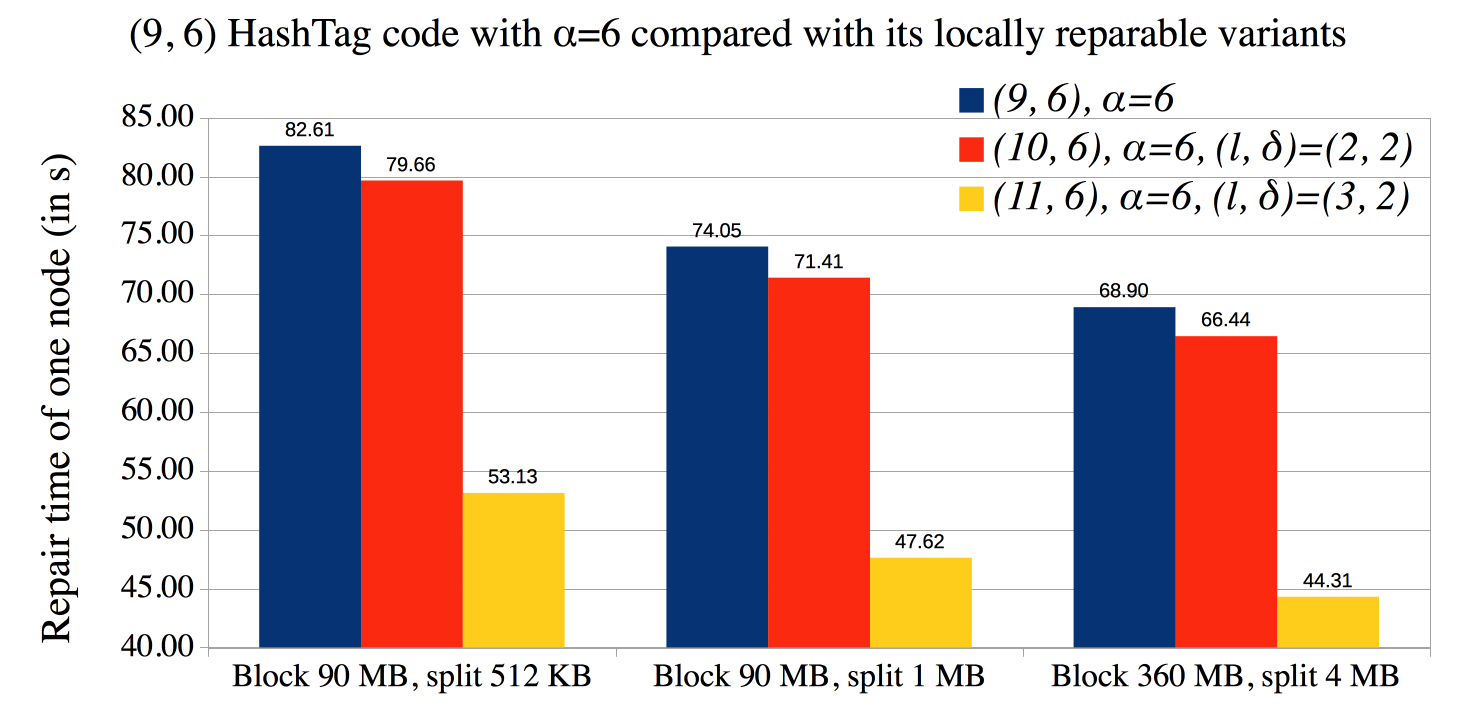}
	\caption{Comparison of repair times for one lost node of 50 GB for a $(9, 6)$ HashTag code with $\alpha=6$, and its locally repairable variants with $(l=2,\delta=2)$ and $(l=3,\delta=2)$.}
	\label{9_6_HadoopAlpha6}
\end{figure}

\section{Conclusions}\label{summary}
We constructed an explicit family of locally repairable and locally regenerating codes. We applied the technique of parity-splitting on HashTag codes and constructed codes with locality. For these codes we showed that there are two ways to repair a node (repair duality), and in practice which way is applied depends on optimization metrics such as the repair bandwidth, the number of I/O operations, the access time for the contacted parts and the size of the stored file. Additionally, we showed that the size of the finite field can be slightly lower than the theoretically obtained lower bound on the size in the literature.
We solved the problem of efficient repair of the global parities when the number of global parities is equal to the sub-packetization level.

\bibliography{refer}
\bibliographystyle{IEEEtran}

\end{document}